\let\old@setaddresses\@setaddresses
\def\@setaddresses{\bgroup\parindent 0pt\let\scshape\relax\old@setaddresses\egroup}
\def\Re{\mathbb R}
\newcommand{\VC}{\ensuremath{\mathsf{VC}}\xspace}
\newcommand{\BARANY}{B{\' a}r{\' a}ny\xspace}
\newcommand{\CARA}{Carath{\' e}odory\xspace}
\newcommand{\Matousek}{Matou{\v s}ek\xspace}
\theoremstyle{plain}
\newcommand{\cardin}[1]{\lvert {#1} \rvert}
\newcommand{\F}{\mathcal{F}}
\theoremstyle{definition}
\newtheorem{definition}{Definition}
\newtheorem{lemma}{Lemma}
\newtheorem{theorem}{Theorem}
\newtheorem{remark}{Remark}
\begin{document}

\bibliographystyle{plain}

\title{A short proof of the first selecttion lemma and weak $\frac{1}{r}$-nets for moving points.\footnote{Work was partially supported by Grant 1136/12 from the Israel Science Foundation}}
\author{Alexandre Rok, Shakhar Smorodinsky\thanks{Work by this author was partially supported by Swiss National Science Foundation Grants 200020144531 and 200021-137574.}}

\maketitle
	
\begin{abstract}
(i) We provide a short and simple proof of the first selection lemma.
(ii) We also prove a selection lemma of a new type in $\Re^d$. For example, when $d=2$ assuming $n$ is large enough we prove that for any set $P$ of $n$ points in general position there are $\Omega(n^4)$ pairs of segments spanned by $P$ all of which intersect in some fixed triangle spanned by $P$. (iii) Finally, we extend the weak $\frac{1}{r}$-net theorem to a kinetic setting where the underlying set of points is moving polynomially with bounded description complexity. We establish that one can find a kinetic analog $N$ of a weak $\frac{1}{r}$-net of cardinality $O(r^{\frac{d(d+1)}{2}}\log^{d}r)$ whose points are moving with coordinates that are rational functions with bounded description complexity. Moreover, each member of $N$ has one polynomial coordinate.


\end{abstract}

\section{Introduction and Preliminaries}
\label{intro}

This paper deals with the so-called first selection lemma and weak $\frac{1}{r}$-nets for convex sets. These are two central notions in discrete geometry.
In the first part of the paper, we provide a short and almost elementary proof of the first selection lemma. As a corollary of our proof technique, we also obtain another type of a selection lemma that we believe is of independent interest.
In the second part of this paper we initiate the study of kinetic weak $\frac{1}{r}$-nets. We extend the classical weak $\frac{1}{r}$-net theorem to a kinetic setting. Our main motivation is the recent result of De Carufel et al. \cite{Ca15} on kinetic hypergraphs.

Before presenting our results we need a few definitions and well known facts:
A pair $(X, \mathcal{S})$, where $\mathcal{S}\subset P(X)$, is called a \emph{set system} or a \emph{hypergraph}. A subset $A\subset X$ is called {\em shattered} if $\mathcal{S}|_{A}= 2^{A}$. The maximum size of a shattered subset from $X$ with respect to $\mathcal{S}$ is called the \emph{\VC-dimension} of $(X, \mathcal{S})$.
The concept of \VC-dimension has its roots in statistics. It first appeared in the paper of Vapnik and Chervonenkis in \cite{VaCh71}. Nowadays, this notion plays a key role in learning theory and discrete geometry.
Given a set system $(X, \mathcal{S})$,   we say that $Y\subset X$ is a \emph{strong $\frac{1}{r}$-net} if for each $S\in \mathcal {S}$ with $|S|> |X|/r$ we have $S\cap Y \neq \emptyset$.  Based on the concept of \VC-dimension, Haussler and Welzl provided a link with strong nets by proving that any set system with \VC-dimension $d$ has a strong $\frac{1}{r}$-net of size $O(dr\log r)$ \cite{HaWe87}. We say that a set of points $P\subset\Re^d$ is in {\em general position} if no $d+1$ of them are contained in a hyperplane. Given two points $x=(x_1,\ldots, x_d)$ and $y=(y_1,\ldots, y_d)$ we write $x<_{\textnormal{lex}}y$ if there is an $1\leq i\leq d$ such that for all indices $j<i$ we have $x_j= y_j$ and $x_i < y_i$. It is not hard to see that $<_{\textnormal{lex}}$ determines a total order on $\Re^d$, and that any compact set $S \subset \Re^d$ admits a unique minimum with respect to $<_{\textnormal{lex}}$, which is called \emph{the lexicographic minimum} of $S$. We denote the convex hull of a set $X \subset \Re^d$ by conv$(X)$, and affine hull of a finite $X $ by aff$(X)$. For a set of points $P \subset \Re^d$ in general position, we refer to the convex hull of a nonempty subset $S \subset P$ with $1 \leq \cardin{S} \leq d+1$ as the {\em ($\cardin{S}-1)$-dimensional simplex spanned by $S$}. A simplex is \emph{spanned by} $P$ if it is spanned by some subset of $P$.

\subsection{The first selection lemma}
Using simple arguments we provide an alternative short proof to the following classical theorem in discrete geometry known as the {\em first selection Lemma}:
\begin{theorem}[\bf First Selection Lemma]\label{first-selection}
There exists a constant $c=c(d)>0$ such that the following holds:
Let $P\subset \Re^d$ be a set of $n$ points in general position. Then there exists a point contained in at least $c {n \choose {d+1}}$ $d$-dimensional simplices spanned by $P$.
\end{theorem}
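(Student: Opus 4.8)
The plan is to derive Theorem~\ref{first-selection} by combining Tverberg's theorem, the colorful \CARA{} theorem, and an elementary counting argument. Set $r=\lfloor (n-1)/(d+1)\rfloor+1$, so that $N:=(d+1)(r-1)+1\le n$ and $r\ge n/(d+1)-1$, and fix an arbitrary $N$-point subset $P'\subseteq P$. First I would apply Tverberg's theorem to $P'$ to obtain a partition $P'=T_1\cup\cdots\cup T_r$ into $r$ nonempty, pairwise disjoint parts whose convex hulls have a common point. The set $\bigcap_{i=1}^{r}\mathrm{conv}(T_i)$ is then a nonempty compact subset of $\Re^d$, so I may take $q$ to be its lexicographic minimum; in particular $q\in\mathrm{conv}(T_i)$ for every $i$.

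The heart of the argument is the following. Fix any $(d+1)$-element set of indices $I=\{i_1,\dots,i_{d+1}\}\subseteq\{1,\dots,r\}$. Since $q$ lies in the convex hull of each of the color classes $T_{i_1},\dots,T_{i_{d+1}}$, the colorful \CARA{} theorem yields a transversal $\{x_1,\dots,x_{d+1}\}$ with $x_j\in T_{i_j}$ and $q\in\mathrm{conv}\{x_1,\dots,x_{d+1}\}$. Because the parts are pairwise disjoint, the points $x_1,\dots,x_{d+1}$ are distinct; as $P$ is in general position they are affinely independent, so $\sigma_I:=\mathrm{conv}\{x_1,\dots,x_{d+1}\}$ is a $d$-dimensional simplex spanned by $P$ and containing $q$. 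Moreover, the set of parts met by the vertices of $\sigma_I$ is exactly $\{T_{i_1},\dots,T_{i_{d+1}}\}$, so $I$ can be recovered from $\sigma_I$; hence $I\mapsto\sigma_I$ is injective and $q$ is contained in at least $\binom{r}{d+1}$ distinct $d$-dimensional simplices spanned by $P$.

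It then remains to check the arithmetic: since $r\ge n/(d+1)-1$, for $n$ large enough (which also guarantees $r\ge d+1$) we have $\binom{r}{d+1}\ge c(d)\binom{n}{d+1}$ for any fixed $c(d)<(d+1)^{-(d+1)}$, and the finitely many remaining values $d+1\le n\le n_0(d)$ are absorbed by further shrinking $c(d)$, using that the barycenter of any fixed $(d+1)$-subset of $P$ already lies in at least one simplex. The step I expect to be the most delicate is the injectivity of $I\mapsto\sigma_I$: it relies crucially on the parts being pairwise \emph{disjoint} (so that each vertex of $\sigma_I$ records which part it came from), which is precisely what a Tverberg partition --- as opposed to merely a centerpoint of $P$ --- provides, and it is what makes this route cleaner than the classical centerpoint-based argument. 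A minor point is that $q$ may lie on the boundary of some $\sigma_I$; this is irrelevant, since containment here is understood in the closed sense and a boundary incidence is still an incidence.
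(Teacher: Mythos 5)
Your argument is correct, but it is not the paper's argument --- it is essentially the classical proof of \BARANY: apply Tverberg's theorem with $r\approx n/(d+1)$ parts to get a point $q$ common to all $r$ convex hulls, then for every $(d+1)$-subset of parts invoke the colorful \CARA theorem to produce a rainbow $d$-simplex containing $q$, and observe that disjointness of the parts makes the map from index sets to simplices injective, giving $\binom{r}{d+1}\geq c(d)\binom{n}{d+1}$ simplices through $q$. All the steps you flag check out (in particular the injectivity via disjointness and the closed-containment remark), and your route yields the better explicit constant $c(d)\approx(d+1)^{-(d+1)}$. The paper deliberately avoids exactly this combination of "two deep theorems": its proof uses Tverberg only in the weakest instance $r=d+1$, i.e.\ only for sets of $d^2+d+1$ points, together with ordinary \CARA, the lexicographic-minimum lemma (Lemma~\ref{lex-min}), and a double-counting/pigeonhole argument over all $(d^2+d+1)$-tuples: each such tuple produces a pair $(\Delta,\mathcal{H})$ of a spanned $d$-simplex and a Tverberg partition of a $d^2$-subtuple into $d$ parts whose Tverberg point lies in $\Delta$, and since there are only $O(\binom{n}{d^2})$ such partitions, some Tverberg point lies in $\Omega(\binom{n}{d+1})$ simplices. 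What your approach buys is a cleaner constant and a shorter count; what the paper's approach buys is the elimination of colorful \CARA and of Tverberg's theorem for unbounded $r$ (at the price of a worse constant $\sim 1/(b_d\binom{d^2+d+1}{d+1})$), plus the by-product Theorem~\ref{rich-simplex}, which does not fall out of your construction.
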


 The planar version of Theorem~\ref{first-selection} with the (best possible) constant $c(2)= \frac{2}{9}$ was proved by Boros and F{\" u}redi in \cite{BF84}. A generalization to an arbitrary dimension was proved by \BARANY in \cite{Bar82} with $c(d) \geq \frac{1}{(d+1)^d}$.
\BARANY used a clever combination of two deep theorems: Tverberg's Theorem \cite{Tve66} and Colorful \CARA's Theorem \cite{Bar82}.
Several other proofs are known (cf. \cite{MATOUSEK}). All proofs combine Tverberg's Theorem with another tool.
A surprising deep generalization to a topological setting was obtained by Gromov \cite{GROMOV} with an even better constant $c(d) \geq \frac{2d}{(d+1)(d+1)!}$. See also the recent exposition of Gromov's method by \Matousek and Wagner \cite{MatWag}.
The best known estimate on $c(3)$ was obtained by Basit et al. in \cite{BMRR10} using clever elementary arguments. Determining the dependency of $c(d)$ on the dimension $d$ remains a wide open problem.
The best known upper bound $c(d) \leq \frac{(d+1)!}{(d+1)^{d+1}}= e^{-\Theta(d)}$ was obtained by Bukh, \Matousek, and Nivasch \cite{BMN10}. In this paper, we provide a simple proof using Tverberg's Theorem in a "very weak" sense described in Section~\ref{sec:short-pf}.

\subsection{A simplex containing many points lemma}
 A point $p$ is said to be \emph{a Tverberg point} for the set $P$ with $|P|=(r-1)(d+1)+1$ if there exists a partition of $P$ into $P_1, \ldots, P_{r}$ with $p\in \bigcap_{i=1}^r\textnormal{conv}(P_i)$. This intersection may contain many points, but its lexicographic minimum is unique. Hence, we say that $p$ is \emph{determined} by the partition above if it is the lexicographic minimum of $\bigcap_{i=1}^r\textnormal{conv}(P_i)$. Using our proof technique (for the first selection lemma) we provide yet another type of a "Selection Lemma":
\begin{theorem}[\bf Simplex Containing Many Points Lemma]\label{rich-simplex}

Let $P\subset \mathbb{R}^d$ be a set of $n\geq d^2+d+1$ points in general position. Then there exists a $d$-simplex $\Delta$ spanned by $P$ and a set of at least $\Omega \left ({n \choose d^2} \right)$ $d^2$-tuples from $P$ each having a Tverberg partition into $d$ sets determining a Tverberg point in $\Delta$.
\end{theorem}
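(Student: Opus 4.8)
The plan is to reduce Theorem~\ref{rich-simplex} to a constant-size statement by double counting, and then to prove that statement by applying Tverberg's Theorem at its exact threshold.

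\textbf{Counting reduction.} Put $m:=d^2+d+1$. I would first isolate a \emph{carving claim}: for every $m$-subset $S$ of $P$ there is a partition $S=T\sqcup A$ with $\cardin T=d^2$ and $\cardin A=d+1$ such that $T$ admits a Tverberg partition into $d$ parts whose determined Tverberg point lies in the $d$-simplex $\mathrm{conv}(A)$. Granting this, every one of the $\binom n m$ subsets $S$ yields a ``good pair'' $(T,A)$, and distinct $S$ give distinct pairs because $T\cup A=S$; hence there are at least $\binom n m$ good pairs. Since $A$ has only $\binom n{d+1}$ possible values, some fixed $(d+1)$-subset $A^{\ast}\subset P$ occurs in at least $\binom n m/\binom n{d+1}=\Omega(n^{d^2})=\Omega\!\big(\binom n{d^2}\big)$ good pairs. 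For each of those the $d^2$-tuple $T$ has a Tverberg partition into $d$ parts determining a point in $\Delta:=\mathrm{conv}(A^{\ast})$, which is the desired conclusion.

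\textbf{Proving the carving claim.} Apply Tverberg's Theorem to $S$: as $m=((d+1)-1)(d+1)+1$, there is a partition $S=S_1\sqcup\cdots\sqcup S_{d+1}$ and a point $z\in\bigcap_{i=1}^{d+1}\mathrm{conv}(S_i)$. Relabel so that $\cardin{S_{d+1}}$ is largest; then $\cardin{S_{d+1}}\ge d+1$, since otherwise $\cardin S\le d(d+1)<m$. By \CARA's Theorem pick $A\subseteq S_{d+1}$ with $\cardin A=d+1$ (padding inside $S_{d+1}$ if needed) and $z\in\mathrm{conv}(A)$; set $T:=S\setminus A$, so $\cardin T=d^2$ and $T\supseteq S_1\cup\cdots\cup S_d$. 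Now, for each $i\le d$ replace $S_i$ by $S_i^{\ast}\subseteq S_i$ with $\cardin{S_i^{\ast}}\le d+1$ and $z\in\mathrm{conv}(S_i^{\ast})$ (Carath\'eodory again; take $S_i^{\ast}=S_i$ when $S_i$ is already small), and distribute the remaining points $T\setminus\bigcup_{i\le d}S_i^{\ast}$ among the $S_i^{\ast}$. The total capacity $\sum_{i\le d}(d+1-\cardin{S_i^{\ast}})=d+\big(d^2-\sum_{i\le d}\cardin{S_i^{\ast}}\big)$ exceeds the number of leftover points $d^2-\sum_{i\le d}\cardin{S_i^{\ast}}$ by exactly $d$, so the distribution can be carried out while keeping every part of size at most $d+1$. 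The resulting parts $T_1,\dots,T_d$ partition $T$, satisfy $z\in\bigcap_{i=1}^d\mathrm{conv}(T_i)$, and obey $\sum_{i=1}^d(d+1-\cardin{T_i})=d(d+1)-d^2=d$ — the codimensions of the affine hulls $\mathrm{aff}(T_i)$ add up to exactly $d$.

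\textbf{The main obstacle.} It remains, and this is the delicate point, to verify that the \emph{determined} point of this partition — the lexicographic minimum of $\bigcap_{i=1}^d\mathrm{conv}(T_i)$ — is $z$, equivalently lies in $\mathrm{conv}(A)$, rather than some other, lexicographically smaller vertex of that polytope. As the codimensions of the $\mathrm{aff}(T_i)$ sum to $d$, for $P$ in \emph{sufficiently} general position these flats meet in the single point $z$, so $\bigcap_{i=1}^d\mathrm{conv}(T_i)=\{z\}$ and there is nothing to check: this is the standard fact that at the Tverberg threshold the common point is unique. Under only the hypothesis ``no $d+1$ points on a hyperplane'' one must either upgrade general position to genericity or argue by perturbation — prove the statement for a generic perturbation $P_{\varepsilon}$ of $P$, observe that the finite list of tuples $T$ and the simplex $\Delta$ stabilize as $\varepsilon\to 0$, and check that ``the determined point lies in $\Delta$'' is preserved in the limit. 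Controlling the lexicographic minimum under such a degeneration is where the real work lies; everything else is elementary and, exactly as in the proof of Theorem~\ref{first-selection}, uses Tverberg's Theorem only in this weak threshold form.
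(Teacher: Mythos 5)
Your counting reduction is exactly the paper's (each $(d^2+d+1)$-subset yields one good pair $(T,A)$, the pair determines the subset it came from, and pigeonholing over the $\binom{n}{d+1}$ possible simplices leaves $\Omega\bigl(\binom{n}{d^2}\bigr)$ distinct $d^2$-tuples for one fixed $\Delta$), and your carving claim is precisely the claim established inside the paper's proof of Theorem~\ref{first-selection}. The gap is the step you yourself flag as ``the main obstacle'': your construction only guarantees that \emph{some} common point $z$ of $\mathrm{conv}(T_1),\ldots,\mathrm{conv}(T_d)$ lies in $\mathrm{conv}(A)$, whereas the statement requires the \emph{determined} point, i.e.\ the lexicographic minimum of $\bigcap_{i=1}^{d}\mathrm{conv}(T_i)$, to lie there. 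Under the paper's general position hypothesis (no $d+1$ points on a hyperplane) that intersection can be positive-dimensional --- already for $d=3$, three planes spanned by disjoint triples can share a line with no four of the nine points coplanar --- and then its lexicographic minimum need not be $z$ and need not lie in $\mathrm{conv}(A)$. Neither of your proposed repairs closes this: assuming ``sufficient genericity'' proves a weaker theorem than the one stated, and the perturbation argument is only sketched and is genuinely delicate, since the lexicographic minimum is not stable under limits (an arbitrarily small rotation of a segment changes which endpoint is minimal), so ``the determined point lies in $\Delta$'' for $P_\varepsilon$ does not automatically survive $\varepsilon\to 0$.

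The paper closes exactly this gap by never needing the intersection to be a single point: it tracks the lexicographic minimum from the outset. After Carath\'eodory-pruning the Tverberg parts $A_1,\ldots,A_{d+1}$ to size at most $d+1$, Lemma~\ref{lex-min} lets one assume that the lexicographic minimum $x$ of $\bigcap_{i=1}^{d+1}\mathrm{conv}(A_i)$ is already the minimum of $\bigcap_{i=1}^{d}\mathrm{conv}(A_i)$. If $x$ is interior to some full part $A_j$ with $j\le d$, that part is set aside as $S$ and the minimum is unchanged when $A_j$ is dropped; otherwise $x$ lies on the boundary of every $(d+1)$-point part among $A_1,\ldots,A_d$, so a vertex can be deleted from each such part while the minimum of the shrinking intersection stays equal to $x$ (the new intersection contains $x$ and is contained in the old one), and the deleted vertices together with $A_{d+1}$ form $S$ with $x\in\mathrm{conv}(A_{d+1})\subseteq\mathrm{conv}(S)$. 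If you replace your redistribution step by this interior/boundary case analysis, your argument becomes a complete proof; as written, its central assertion is unproven.
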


For example, for $d=2$ Tverberg points above are simply Radon points so these are either points of $P$ or crossings of two spanned segments. Hence, Theorem~\ref{rich-simplex} implies that for every set $P$ of $n\geq d^2+d+1$ points in general position in the plane there exists a triangle $\Delta$ spanned by $P$ that either contains $\Omega(n^4)$ points of $P$ counted with multiplicities (each point is counted once for each Radon partition of a $4$-tuple it arises from) or  $\Omega(n^4)$ pairs of segments spanned by $P$ intersect in $\Delta$. Since a point of $P$ can be the Radon point of at most $O(n^3)$ Radon partitions of $4$-tuples, we see that in the former case $\Delta$ contains $\Omega(n)$ distinct points of $P$. However, by the well known crossing lemma \cite{ACNS82,Leighton83} every convex set containing $m$ (for $m$ large enough) points also contains $\Omega(m^4)$ of its crossing points. Hence, for $n$  large enough Theorem~\ref{rich-simplex} implies  that $\Omega(n^4)$ pairs of segments spanned by $P$ intersect in $\Delta$.

\subsection{Kinetic weak $\frac{1}{r}$-nets}
Next, we study the notion of weak $\frac{1}{r}$-net in a kinetic setting. Let us first recall the concept of weak $\frac{1}{r}$-net in the static case.

\begin{definition}[\bf Weak $\frac{1}{r}$-net]
Let $P\subset \mathbb{R}^d$  be a finite set of points and $r\geq 1$. A set $N\subset \Re^d$ is said to be a weak $\frac{1}{r}$-net for $P$ if every convex set containing  $>\frac{1}{r} \cardin{P}$ points of $P$ also contains a point of $N$.
\end{definition}
The following theorem is one of the major milestones in modern discrete geometry:
\begin{theorem}[\bf Weak $\frac{1}{r}$-net Theorem \cite{ABFK,CEGGSW,MW}]\label{thm:weak-net}
Let $d,r\geq 1$ be integers. Then there exists a least integer $f(r,d)$ such that for every finite set $P\subset \Re^d$ there is a weak $\frac{1}{r}$-net of size at most $f(r,d)$.
\end{theorem}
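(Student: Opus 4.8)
\emph{Proof strategy.} Theorem~\ref{thm:weak-net} is classical, so I would only recall the argument, the more so since its structure is what we adapt to the kinetic setting. The plan is to reduce the existence of a size-bounded weak net to a covering problem for the $d$-simplices spanned by $P$, and then to solve that problem by iterating a point-selection statement. The first step is a chain of routine reductions: since every convex $C$ contains $\mathrm{conv}(C\cap P)$, it is enough to build a finite $N$ meeting $\mathrm{conv}(Q)$ for every $Q\subseteq P$ with $\cardin Q>\cardin P/r$; and a standard limiting argument (perturb $P$ into general position, take weak nets of bounded size for the perturbations, assume each lies in the convex hull of the corresponding point set hence in a fixed compact set, and pass to an accumulation set) lets me assume $P$ is in general position. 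Write $n=\cardin P$. If $n$ is below a threshold $n_0=n_0(r,d)$ I would simply take $N=P$, a weak net of size $n<n_0$; so assume $n\ge n_0$, with $n_0$ chosen large enough (e.g.\ $n_0\ge 2r(d+1)$) that $\lfloor n/r\rfloor\ge d+1$. Then every $Q$ with $\cardin Q>n/r$ has $\mathrm{conv}(Q)$ containing all $N_0:=\binom{\lfloor n/r\rfloor+1}{d+1}$ of the $d$-simplices spanned by the $(d+1)$-subsets of $Q$, and $N_0\ge c_1(r,d)\binom{n}{d+1}$ for some explicit $c_1(r,d)>0$ independent of $n$.

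With these reductions it suffices to find a finite $N$ that \emph{stabs} (i.e.\ contains a point of) all but fewer than $N_0$ of the $\binom n{d+1}$ $d$-simplices spanned by $P$: indeed, each $Q$ with $\cardin Q>n/r$ then has one of its $\binom{\cardin Q}{d+1}\ge N_0$ simplices stabbed by some $x\in N$, and such an $x$ lies in $\mathrm{conv}(Q)\subseteq C$. I would build $N$ greedily. As long as the family $\mathcal F$ of not-yet-stabbed simplices satisfies $\cardin{\mathcal F}\ge N_0\ge c_1(r,d)\binom n{d+1}$, I invoke the \emph{second selection lemma} to get a point $x$ lying in at least $c_2(c_1(r,d),d)\binom n{d+1}$ members of $\mathcal F$, for some $c_2>0$; then I add $x$ to $N$ and delete from $\mathcal F$ every simplex containing $x$. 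Each round removes at least $c_2\binom n{d+1}$ simplices, so after at most $1/c_2$ rounds fewer than $N_0$ remain, yielding $N$ with $\cardin N\le\max\{n_0,\,1/c_2(c_1(r,d),d)\}=:f(r,d)$, a quantity depending on $r$ and $d$ only, which is exactly the assertion.

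The main obstacle is the second selection lemma itself: Theorem~\ref{first-selection} produces a point inside a positive fraction of \emph{all} $\binom n{d+1}$ simplices, whereas the greedy step needs a point inside a positive fraction of an \emph{arbitrary} dense subfamily $\mathcal F$. I would bridge this the classical way (cf.\ \cite{MATOUSEK}): the $(d+1)$-uniform hypergraph of index sets of $\mathcal F$, being dense, contains many complete $(d+1)$-partite subhypergraphs with parts of a fixed size $t=t(c_1,d)$, and applying Theorem~\ref{first-selection} to the point set underlying one such subhypergraph localizes a point into many rainbow simplices, all of which belong to $\mathcal F$. The constants $c_2$, hence $f(r,d)$, produced this way are enormous in $r$ and $d$, so to reach the polynomial-in-$r$ bounds of \cite{CEGGSW,MW} one replaces the greedy step by the sharper technique of simplicial partitions; that refined route is the one we will recast in kinetic form in the sequel.
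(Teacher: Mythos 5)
This theorem is quoted in the paper from \cite{ABFK,CEGGSW,MW} and not reproved there, so your sketch must stand on its own. Its scaffolding is the classical one and is fine: reducing to $\mathrm{conv}(Q)$ for heavy $Q\subseteq P$, the compactness/perturbation step, and the greedy stabbing argument with a bounded number of rounds are all standard and correct as outlined. The genuine gap is in the step you yourself flag as the main obstacle, namely the claimed derivation of the second selection lemma from Theorem~\ref{first-selection}. Applying Theorem~\ref{first-selection} to the $(d+1)t$ points underlying one complete $(d+1)$-partite subhypergraph does \emph{not} localize a point into rainbow simplices: the rainbow $(d+1)$-tuples form only about a $\frac{(d+1)!}{(d+1)^{d+1}}$-fraction of all $(d+1)$-tuples of that point set, while by the Bukh--\Matousek--Nivasch bound quoted in the introduction one has $c(d)\le\frac{(d+1)!}{(d+1)^{d+1}}$ (and the constants actually proved are far smaller), so the count $c(d)\binom{(d+1)t}{d+1}$ could in principle be exhausted by non-rainbow simplices, i.e.\ by simplices not in $\mathcal{F}$. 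Moreover, a single copy contains only a constant (depending on $d,t$) number of simplices, whereas the greedy step needs a point in $c_2\binom{n}{d+1}$ members of $\mathcal{F}$; the classical proof of the second selection lemma (cf.\ \cite{MATOUSEK}) aggregates over all copies produced by supersaturation and, crucially, replaces the plain first selection lemma by the \emph{colored} Tverberg theorem exactly because of this rainbow issue. As written, your bridge does not go through.

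The repair is easy and makes the proof both correct and more elementary, avoiding the second selection lemma altogether. Run the greedy on convex sets rather than on the global family of unstabbed simplices: while there is a convex $C$ with $\cardin{C\cap P}>n/r$ and $C\cap N=\emptyset$, apply Theorem~\ref{first-selection} to $C\cap P$ to obtain a point $x$ lying in at least $c(d)\binom{\lceil n/r\rceil}{d+1}\ge c'(r,d)\binom{n}{d+1}$ simplices spanned by $C\cap P$; all of these lie in $\mathrm{conv}(C\cap P)\subseteq C$ and hence were unstabbed before $x$ was added. Each round therefore permanently stabs at least $c'(r,d)\binom{n}{d+1}$ new simplices out of the total $\binom{n}{d+1}$, so the process stops after at most $1/c'(r,d)=O\bigl(r^{d+1}/c(d)\bigr)$ rounds, giving $f(r,d)=O(r^{d+1})$ using only Theorem~\ref{first-selection} (your reduction to $\mathrm{conv}(Q)$, the small-$n$ case, and the perturbation/limit argument for degenerate $P$ can be kept verbatim). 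If you do want the sharper bounds of \cite{CEGGSW,MW}, then indeed a different mechanism is needed, but the second selection lemma should then be invoked as a black box rather than deduced from Theorem~\ref{first-selection} in the way you describe.
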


The existence of $f(r,d)$ was first proved by Alon et al. \cite{ABFK} with the bounds $f(2,r)=O(r^2)$ and $f(d,r)=O(r^{(d+1)(1- \frac{1}{s_d})})$ for $d \geq 3$, where $s_d$ tends to $0$ exponentially fast. Later, better bounds on $f(r,d)$ for $d\geq 3$ were obtained by Chazelle at al. in \cite{CEGGSW}, who showed that $f(d,r)=O(r^d\log^{b_d} r)$, where $b_d$ is roughly $2^{d-1}(d-1)!$. The current best known upper bound for $d\geq 3$ due to Matou{\v s}ek and Wagner \cite{MW} is
$f(r,d)= O(r^d \log^{c(d)} r)$, where $c(d) = O(d^2 \log d)$,  and
$f(r,2) = O(r^2)$ \cite{ABFK}. The best known lower
bound was provided by Bukh, Matou{\v s}ek, and
Nivasch \cite{BMN11}, who showed that $f(d,r) = \Omega(r \log^{d-1} r)$ for $d \geq 2$. Recently, some interesting connections were found between strong and weak nets. In particular, Mustafa and Ray \cite{MuRa10} showed how one can construct weak $\frac{1}{r}$-nets from strong $\frac{1}{r}$-nets. They obtained a bound of $O(r^3\log^3r)$ in $\mathbb{R}^2$, $O(r^5\log^5r)$ in $\mathbb{R}^3$, and provided a bound of $O(r^{d^2}\log^{d^2}r)$ for $d\geq 4$ on the size of weak $\frac{1}{r}$-nets.

{\noindent \bf A kinetic framework:}
The problem of finding strong $\frac{1}{r}$-nets has been recently considered in a kinetic setting  by De Carufel et al. \cite{Ca15}. Their work and extensive research in the static case motivates us to consider the problem of weak $\frac{1}{r}$-net in a kinetic setting.
Let us define this setting:
A \emph{moving point} is a  function from $\mathbb{R}_{+}$ to $ \mathbb{R}^d\cup \{\emptyset\}$ for some $d\geq 1$. In this paper, we are interested in the case where this function is polynomial or rational, i.e., each coordinate is a polynomial or a rational function. If one of the coordinates is not defined for some $t$, then the moving point is not defined. For simplicity, we often use the term point for a moving point if there is no confusion. In what follows, the dimension $d$ is assumed to be fixed. For a set $P$ of moving points and a ``time" $t\in \mathbb{R}_{+}$, we denote by $P(t)$ the set $\{p(t) | p \in P\}$. We say that a set $P$ of moving points in $\mathbb{R}^d$ has bounded description complexity $\beta$ if for each point $p(t)=(p_{1}(t),\ldots, p_{d}(t))$, each $p_i(t)$ is a rational function with both numerator and denominator having degree at most $\beta$. We say that the function $h$ with domain $\mathbb{R}_{+}$ is a \emph{moving affine space} if for some integer $k$ and any $t\geq 0$ $h(t)$ is an affine space of dimension $k$ or the emptyset. In case $h(t)$ is not always equal to the emptyset, we also say that such a $h$ has \emph{dimension} $k$. If $k=1$ (respectively $k=d-1$) we refer to the corresponding moving affine space as a \emph{moving line} (respectively, a \emph{moving hyperplane}). For simplicity, we often write a moving space. Similarly to moving points, if a moving space $h$ is given by $x_1=p_{1},\ldots, x_{k}=p_{k}$, where $p_i: \mathbb{R}_{+} \to \mathbb{R}\cup \{\emptyset\}$ are moving points and $p_i(t)$ is not defined for some $t\geq 0$ then $h(t)$ is not defined. For a set $P=\{p_1,\ldots, p_n\}$ of moving points in $\mathbb{R}^d$ and a vector space $V\subset \mathbb{R}^d$, we say that $P'=\{p'_1,\ldots, p'_n\}$ is a \emph{projection of $P$ onto $V$} if $p'_i(t)=\textnormal{proj}_{V}(p_i(t))$ for all $t\geq 0$.
\begin{definition}[\bf Kinetic Weak $\frac{1}{r}$-net:]
Given a set $P$ of points moving in $\mathbb{R}^d$, we say that a set of moving points $N$ is a kinetic weak $\frac{1}{r}$-net for $P$ if for any $t\in \mathbb{R}_{+}$ and any convex set $C$ with $C\cap P(t)> n/r$ we have $C\cap N(t) \neq \emptyset$.
\end{definition}

We sometimes abuse the notation and write net or weak net instead of kinetic weak net. In order to establish our result regarding kinetic nets, we need the following natural general position assumption on the set $P$ of moving points: We assume that for any $t\geq 0$ the affine hull of any $d$-tuple of points in $P(t)$ is a hyperplane, but no $d+2$ points of $P(t)$ are contained in a hyperplane. The latter can easily be relaxed to no $c(d)\geq d+2$ points in a hyperplane. Under these natural assumptions, we prove the following theorem that could be viewed as a generalization of Theorem~\ref{thm:weak-net}:

\begin{theorem}[\bf Kinetic Weak $\frac{1}{r}$-net Theorem:]\label{thm:kinetic-weak-nets}
For every triple of integers $r, d$ and $\beta$ there exist $c(r,d,\beta)$ and $g(d,\beta)$ such that for every finite set $P$ of moving points in $\Re^d$ with description complexity $\beta$ there is a kinetic weak $\frac{1}{r}$-net of cardinality at most $c(r,d,\beta)$ and description complexity $g(d,\beta)$. Moreover,  for fixed $d$ and $\beta$ and $r\geq 2$ we have $c(r,d, \beta)=O(r^{\frac{(d+1)d}{2}} \log^d r)$.
\end{theorem}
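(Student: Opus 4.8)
The plan is to linearize the kinetic problem into a static one living in a space whose dimension depends only on $d$ and $\beta$, read off from this the mere existence of $c(r,d,\beta)$ and $g(d,\beta)$, and then sharpen the quantitative bound by a recursion on the ambient dimension $d$ in which $\beta$ influences only the hidden constants. For the first step I would homogenize the motion: after clearing denominators, a moving point $p$ with rational coordinates of numerator and denominator degree $\le\beta$ is encoded by its curve $t\mapsto(P_0(t),\dots,P_d(t))$ of homogeneous coordinates, the $P_i$ being polynomials of degree at most some $\beta'=\beta'(d,\beta)$; collecting their coefficients identifies $p$ with a single static point $a_p$ of $\Re^{D}$, where $D=D(d,\beta)=(d+1)(\beta'+1)$ depends only on $d,\beta$. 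The crucial point is that ``$p(t)$ lies in a halfspace $H\subseteq\Re^d$'' becomes, after clearing the sign-controlled denominator, a single linear inequality on $a_p$ whose normal is the tensor product of the normal of $H$ with the moment-curve point $(1,t,\dots,t^{\beta'})$; taking intersections of halfspaces, for every convex $C\subseteq\Re^d$ and every $t\ge 0$ there is a convex body $\widehat C_t\subseteq\Re^D$, cut out by such structured halfspaces, with $\{p\in P:p(t)\in C\}=\{p\in P:a_p\in\widehat C_t\}$. Symmetrically, a static point $b\in\Re^D$ reads back off as a moving point $n_b$ satisfying $n_b(t)\in C\iff b\in\widehat C_t$, with coordinates rational in $t$ of controlled degree and, in an appropriately normalized chart of the lift, with one genuinely polynomial coordinate; Cramer's rule keeps all degrees below a function $g(d,\beta)$. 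Hence any static weak $\tfrac1r$-net for $\{a_p:p\in P\}$ supplied by Theorem~\ref{thm:weak-net} pulls back to a kinetic weak $\tfrac1r$-net for $P$, which already proves the existence statement. What this crude route does \emph{not} give is the claimed exponent, since the net size is $f(r,D)$, whose $r$-exponent grows with $D$ and hence with $\beta$.

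To make the $r$-exponent depend only on $d$, I would refrain from treating $\widehat C_t$ as a generic convex body in $\Re^D$ and instead run a weak-net construction ``one coordinate of the original space at a time''. Following the scheme that builds weak $\tfrac1r$-nets out of strong $\tfrac1r$-nets and $\tfrac1r$-cuttings, I would recurse on $d$: project $P$ onto a hyperplane of $\Re^d$ (which, on the lifted side, is again a structured range space of the same type in one lower dimension), build a kinetic weak net for the image by the recursion, and handle the remaining fibre by a cutting/strong-net step in the lifted picture contributing a factor $O(r^{d}\log r)$. The base case $d=1$ is precisely a kinetic strong $\tfrac1r$-net for the range space of moving intervals, furnished by the kinetic-hypergraph theorem of De~Carufel et al.\ \cite{Ca15}: it has size $O(r\log r)$ and bounded description complexity, and — being a sub-family of the moving points themselves — it is in particular a kinetic weak net in $\Re^1$. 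Solving $c(r,d,\beta)\le O(r^{d}\log r)\cdot c(r,d-1,\beta')$ with $c(r,1,\beta)=O(r\log r)$ yields $c(r,d,\beta)=O(r^{\,d(d+1)/2}\log^{d}r)$. The reason $\beta$ drops out of the exponent is that every single halfspace- or simplex-membership test in the lifted space is governed by a polynomial in $t$ of degree $O(\beta)$, so a one-parameter family of such tests changes its outcome only $O(\beta)$ times; it is this uniform ``bounded number of events'' — rather than $\beta$ itself — that the cutting and strong-net estimates at each level consume, exactly as the general-position hypotheses are consumed in the static proofs.

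I expect the real work to sit at the interface of the two steps: carrying out the dimension recursion \emph{inside the lifted space} while (a) keeping the description complexity of every net point bounded by one function $g(d,\beta)$ — the recursion composes $d$ projections and solves $d$ linear systems with coefficients rational in $t$, and the auxiliary degree $\beta'$ is itself regenerated at each level, so one must verify that nothing accumulates beyond a function of $d$ and $\beta$ — and (b) checking that the structured bodies $\widehat C_t$ interact with projections, cuttings, and the Radon/Tverberg-type arguments exactly as honest convex sets do, with the moment-curve structure supplying the bounded event count in place of general position. Subsidiary points are the bookkeeping of signs when denominators are cleared and the reduction of the stated general-position hypothesis on $P(t)$ to the form needed by the static lemmas invoked at each level; neither should be serious, but making the two main steps fit together cleanly — and in particular propagating ``one polynomial coordinate'' through the whole recursion — is where I would expect to spend the most effort.
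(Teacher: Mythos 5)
Your proposal gets the right base ingredient, and it is the same one the paper uses: kinetic strong $\frac{1}{r}$-nets of size $O(r\log r)$ for the hypergraph of moving points on a line with respect to intervals, obtained from a bounded-number-of-events argument (the paper's Lemmas~\ref{NHE}, \ref{VC-INT}, \ref{Lemma3}; note that you cannot quote De Carufel et al.\ verbatim here, since the derived points appearing at later stages move \emph{rationally}, which is exactly why the paper reproves the VC bound for rational motion). The genuine gap is in the step that carries the whole quantitative claim: your dimension recursion is asserted, not constructed. The paper's proof (Theorem~\ref{MT}) fixes in advance a hierarchy of moving flats of steps $1,\dots,d$: given a step-$j$ flat $h$, it forms the moving points $\mathrm{aff}(X(t))\cap h(t)$ over all $(j+1)$-tuples $X\subset P$, projects them to the $x_{j+1}$-axis, and takes a kinetic strong net with parameter $\lambda_{j+1}/r^{j+1}$ (so the factor at level $j+1$ is $O(r^{j+1}\log r)$, not a uniform $O(r^d\log r)$). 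Its correctness argument must then show, for \emph{every} time $t$ and every heavy convex $C$, that some step-$j$ flat crosses $\Omega(n^{j+1}/r^{j+1})$ $j$-simplices spanned by $C\cap P(t)$; this is where the selection-type Lemma~\ref{TLC} of Chazelle et al., the cleanup Lemma~\ref{Lemma9}, and the careful choice of the parameters $\lambda_j,\gamma_j$ do the real work. Your sketch ``project $P$ onto a hyperplane, build a kinetic weak net for the image recursively, and handle the fibre by a cutting/strong-net step'' does not supply this mechanism and, taken literally, fails: a net point for the projected set only guarantees that the fibre line over it \emph{meets} $C$ at time $t$, and $C$ intersected with that line is an arbitrary segment on which no pre-placed bounded family of moving points has any reason to lie. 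The entire difficulty is to manufacture, on each chosen flat, a derived point set that is provably heavy inside $C$ (here: the spanned-simplex intersection points), and that is precisely what is missing; the recursion $c(r,d)\le O(r^{d}\log r)\,c(r,d-1)$ is reverse-engineered from the target bound rather than derived.

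A secondary, more repairable gap sits in your lifting argument for bare existence, which is genuinely different from anything in the paper and is an appealing idea. Lifting $P$ to static points of $\Re^{D}$ and applying Theorem~\ref{thm:weak-net} indeed shows every heavy $\widehat C_t$ contains a static net point $b$; but reading $b$ back as a moving point requires the homogenizing coordinate $Q_0^{(b)}(t)$ of $b$ to be positive at the relevant time, otherwise the lifted halfspace inequalities reverse (or $n_b(t)$ is undefined when $Q_0^{(b)}(t)=0$), and Theorem~\ref{thm:weak-net} as a black box does not even promise net points inside the convex hull of the lifted set. Likewise, the promised ``one polynomial coordinate'' does not fall out of the lift, whereas in the paper it is automatic because the first coordinate of every net point is the projection of an original point of $P$. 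These issues are plausibly fixable with extra normalization, but they are more than sign bookkeeping, and in any case the lifting route by itself cannot give the exponent $\frac{d(d+1)}{2}$, as you note.
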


Moreover, in the case where the points of $P$ move polynomially, the moving points of the kinetic weak $\frac{1}{r}$-net have one polynomial coordinate.
This is an important advantage of our construction as many naturally defined moving points, obtained by intersecting moving affine spaces, have no polynomial coordinates.

\section{A Short Proof of the First Selection Lemma}\label{sec:short-pf}
Let us first recall Tverberg's Theorem:
\begin{theorem}[\bf Tverberg's Theorem \cite{Tve66}]\label{Tverberg-thm}
let $r,d \geq 1$ be integers. Let $P\subset \Re^d$ be a set of $(r-1)(d+1)+1$ points. Then $P$ admits a partition into $r$ sets  $P_1,\ldots, P_r$ such that $\bigcap_{i=1}^r \textnormal{conv}(P_i) \neq \emptyset$.
\end{theorem}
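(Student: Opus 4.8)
The plan is to deduce Tverberg's Theorem from the Colorful \CARA Theorem by means of Sarkaria's tensor trick. Write $N=(r-1)(d+1)+1$ and let $p_1,\dots,p_N\in\Re^d$ be the given points. First I would lift each point to $\widehat p_i=(p_i,1)\in\Re^{d+1}$, and fix vectors $v_1,\dots,v_r\in\Re^{r-1}$ that are the vertices of a nondegenerate simplex with barycenter at the origin; concretely, take $v_1,\dots,v_{r-1}$ to be the standard basis of $\Re^{r-1}$ and $v_r=-(v_1+\dots+v_{r-1})$, so that any $r-1$ of the $v_j$ are linearly independent while $\sum_{j=1}^r v_j=0$ is, up to scaling, their only linear dependence.

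Next, for each $i\in\{1,\dots,N\}$ I would form the ``color class'' $C_i=\{\,\widehat p_i\otimes v_j : j=1,\dots,r\,\}$ inside $\Re^{d+1}\otimes\Re^{r-1}\cong\Re^{(d+1)(r-1)}$. Bilinearity of $\otimes$ gives $\sum_{j=1}^r \tfrac1r\,(\widehat p_i\otimes v_j)=\widehat p_i\otimes\big(\tfrac1r\sum_{j=1}^r v_j\big)=0$, so the origin lies in $\textnormal{conv}(C_i)$ for every $i$. Since the ambient space has dimension $(d+1)(r-1)=N-1$, we have exactly $N$ color classes, each of whose convex hulls contains $0$. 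The Colorful \CARA Theorem then produces a rainbow choice: an index $j(i)\in\{1,\dots,r\}$ for every $i$ together with weights $\lambda_i\ge 0$, $\sum_{i=1}^N\lambda_i=1$, such that $\sum_{i=1}^N \lambda_i\,(\widehat p_i\otimes v_{j(i)})=0$.

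Then I would let the partition be $P_j=\{\,p_i : j(i)=j\,\}$ and set $w_j=\sum_{i:\,j(i)=j}\lambda_i\widehat p_i\in\Re^{d+1}$, so that grouping the terms above by the value of $j(i)$ gives $\sum_{j=1}^r w_j\otimes v_j=0$. Substituting $v_r=-\sum_{j<r}v_j$ yields $\sum_{j=1}^{r-1}(w_j-w_r)\otimes v_j=0$, and the linear independence of $v_1,\dots,v_{r-1}$ forces $w_1=\dots=w_r=:w$. Writing $w=(w',\mu)$ with $w'\in\Re^d$, the last-coordinate part of the identity $w_j=w$ reads $\sum_{i:\,j(i)=j}\lambda_i=\mu$ for each $j$; summing over $j$ gives $r\mu=\sum_i\lambda_i=1$, hence $\mu=1/r>0$. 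In particular each $P_j$ is nonempty, and dividing the first-$d$-coordinate part $\sum_{i:\,j(i)=j}\lambda_i p_i=w'$ by $\mu$ exhibits $rw'$ as a convex combination of the points of $P_j$, for every $j$. Therefore $rw'\in\bigcap_{j=1}^r\textnormal{conv}(P_j)$, which is thus nonempty, as claimed.

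I expect the only genuine obstacle to be the appeal to the Colorful \CARA Theorem, which is itself a nontrivial result (and the one already underlying \BARANY's classical argument); the rest is elementary linear algebra together with careful tensor bookkeeping — in particular, verifying that the sole obstruction to independence among the tensors $w\otimes v_j$ is the relation $\sum_j v_j=0$, which is precisely what forces the vectors $w_j$ to collapse to a single point and hence yields the common Tverberg point.
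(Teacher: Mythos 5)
The paper never proves this statement: Tverberg's Theorem is imported from \cite{Tve66} as a black box, and the whole point of Section~\ref{sec:short-pf} is to invoke it only once, for $r=d+1$. So there is no proof in the paper to compare against; what you have written is Sarkaria's tensor-trick proof, and it is correct. The bookkeeping checks out: with $N=(r-1)(d+1)+1$, the space $\Re^{d+1}\otimes\Re^{r-1}$ has dimension $N-1$, and each class $C_i$ contains $0$ in its convex hull because $\sum_{j=1}^{r}v_j=0$, so the Colorful \CARA Theorem applies to exactly the right number of color classes; grouping the rainbow combination by $j(i)$ gives $\sum_{j}w_j\otimes v_j=0$, the linear independence of $v_1,\ldots,v_{r-1}$ does force $w_1=\cdots=w_r=(w',\mu)$ (viewing $\sum_{j<r}(w_j-w_r)\otimes v_j$ as a matrix whose columns are $w_j-w_r$), the last coordinate yields $\mu=1/r>0$, which in particular makes every part $P_j$ nonempty, and rescaling by $r$ places $rw'$ in $\textnormal{conv}(P_j)$ for every $j$. (For $r=1$ the statement is trivial, so the degenerate case $\Re^{r-1}=\Re^{0}$ is harmless.) The comparison, then, is one of economy rather than of competing arguments: your route makes the theorem self-contained modulo the Colorful \CARA Theorem of \BARANY \cite{Bar82} — which is itself one of the two ``deep theorems'' whose combination the paper is explicitly trying to avoid in its own contribution — whereas the paper simply cites Tverberg's original proof and spends its effort elsewhere. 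As a blind proof of the quoted theorem, your write-up is complete and standard.
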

We need the following well known lemmas (see, cf. \cite{MATOUSEK}):
\begin{lemma}
\label{lex-min}
Let $j,d$ with $j > d$ be some fixed integers. Let $\F = \{ C_{1}, \ldots, C_{j}\}$ be a family of convex sets in $\mathbb{R}^d$ with a nonempty intersection. Then the lexicographic minimum of $\bigcap_{i=1}^{j}C_{i}$ is equal to the lexicographic minimum of the intersection of some $d$-tuple in $\F$.
\end{lemma}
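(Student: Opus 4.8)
The plan is to obtain the statement as essentially a one-line consequence of Helly's theorem, after enlarging the family $\F=\{C_1,\dots,C_j\}$ by a single auxiliary convex set that pins down the lexicographic minimum. Write $x$ for the lexicographic minimum of $\bigcap_{i=1}^{j}C_i$ and put $W := \{\,y\in\Re^d : y\lex x\,\}$, the set of all points strictly below $x$ in the lexicographic order. By definition $x$ belongs to $\bigcap_{i=1}^{j}C_i$, and this intersection contains no point of $W$.

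The one step that genuinely requires an argument is that $W$ is \emph{convex}; this is slightly surprising, since $W$ is neither open nor closed and $\lex$ is not the sublevel relation of a linear functional. To see it, take $p,q\in W$ and let $k_p$ (respectively $k_q$) be the first coordinate in which $p$ (respectively $q$) differs from $x$; by definition $p$ agrees with $x$ on coordinates $1,\dots,k_p-1$ and $p_{k_p}<x_{k_p}$, and similarly for $q$. We may assume $k_p\le k_q$. Then $q$ also agrees with $x$ on coordinates $1,\dots,k_p-1$, and $q_{k_p}\le x_{k_p}$ (with equality when $k_p<k_q$, and $q_{k_p}<x_{k_p}$ when $k_p=k_q$). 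Hence for $0<\lambda<1$ the point $m=\lambda p+(1-\lambda)q$ agrees with $x$ on coordinates $1,\dots,k_p-1$ while $m_{k_p}<x_{k_p}$, so $m\lex x$; the endpoint cases $\lambda\in\{0,1\}$ are trivial. Thus $W$ is convex.

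It then remains to apply Helly's theorem to the finite family $\{W,C_1,\dots,C_j\}$ of convex sets in $\Re^d$, which has empty intersection: some subfamily of size $d+1$ already has empty intersection. Since $\bigcap_{i=1}^{j}C_i\neq\emptyset$ by hypothesis, no $d+1$ of the $C_i$ can have empty intersection, so this $(d+1)$-subfamily must contain $W$; hence there is $J\subseteq\{1,\dots,j\}$ with $\card{J}\le d$ and $W\cap\bigcap_{i\in J}C_i=\emptyset$. This last equality says that $\bigcap_{i\in J}C_i$ contains no point lexicographically smaller than $x$, while $x\in\bigcap_{i=1}^{j}C_i\subseteq\bigcap_{i\in J}C_i$; therefore $x$ is the lexicographic minimum of $\bigcap_{i\in J}C_i$. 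If $\card{J}<d$ we enlarge $J$ to a $d$-tuple, which is possible since $j>d$. The only real obstacle in this plan is the convexity of $W$; granting that, the conclusion is immediate from Helly's theorem.
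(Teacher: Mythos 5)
Your argument is correct and complete: the convexity of $W=\{y : y \lex x\}$ is exactly the point that needs care, your case analysis on the first coordinate of disagreement handles it properly, and the Helly step (which for a finite family needs no closedness assumptions) together with the padding of $J$ to size $d$ using $j>d$ is sound. Note that the paper does not actually prove Lemma~\ref{lex-min} --- it is quoted as a well-known fact with a reference to \cite{MATOUSEK} --- so there is no in-paper proof to compare against; what you have written is the standard argument that the citation points to, namely applying Helly's theorem to the $C_i$ together with the convex set of points lexicographically below the minimum.
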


\begin{theorem}[\bf \CARA Theorem]
Let $X\subset \mathbb{R}^{d}$. Then for each $x\in \textnormal{conv}(X)$ there exist $x_1,\ldots, x_{t}\in X$ with $t\leq d+1$ such that $x=\textnormal{conv}(\{x_1,\ldots, x_t\})$.

\end{theorem}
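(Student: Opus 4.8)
The plan is to use the classical shrinking argument: start from an arbitrary representation of $x$ as a finite convex combination of points of $X$, and whenever the number of points used exceeds $d+1$, exploit a vanishing affine dependence to cancel one of the terms, lowering the count by one.

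First I would fix $x\in\textnormal{conv}(X)$ and write $x=\sum_{i=1}^{k}\lambda_i x_i$ with $x_1,\ldots,x_k\in X$, all $\lambda_i>0$, and $\sum_{i=1}^{k}\lambda_i=1$, choosing among all such representations one with $k$ minimal. Such a representation exists because, by the very definition of the convex hull, every point of $\textnormal{conv}(X)$ is a finite convex combination of elements of $X$, so the set of admissible values of $k$ is a nonempty set of positive integers. If $k\leq d+1$ there is nothing to prove, so assume for contradiction that $k\geq d+2$.

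Next I would produce a nontrivial affine dependence among $x_1,\ldots,x_k$. Since $k-1\geq d+1$, the vectors $x_2-x_1,\ldots,x_k-x_1$ lie in $\mathbb{R}^d$ and are therefore linearly dependent; picking a nontrivial relation $\sum_{i=2}^{k}\mu_i(x_i-x_1)=0$ and setting $\mu_1:=-\sum_{i=2}^{k}\mu_i$ yields coefficients $\mu_1,\ldots,\mu_k$, not all zero, with $\sum_{i=1}^{k}\mu_i x_i=0$ and $\sum_{i=1}^{k}\mu_i=0$. Because these coefficients sum to $0$ but are not all zero, at least one of them is strictly positive.

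Finally I would perturb the original representation along this relation. For $t\geq 0$ the coefficients $\lambda_i-t\mu_i$ still sum to $1$ and still give $x=\sum_{i=1}^{k}(\lambda_i-t\mu_i)x_i$. Taking $t^\ast:=\min\{\lambda_i/\mu_i:\mu_i>0\}$, attained at some index $j$, every coefficient $\lambda_i-t^\ast\mu_i$ is $\geq 0$ — automatically when $\mu_i\leq 0$, and by the choice of $t^\ast$ when $\mu_i>0$ — while the $j$-th coefficient is exactly $0$. Deleting that vanishing term exhibits $x$ as a convex combination of only $k-1$ points of $X$, contradicting the minimality of $k$. Hence $k\leq d+1$, as claimed. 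There is no genuine obstacle here; the only points needing a little care are checking that the perturbed coefficients stay nonnegative for both signs of $\mu_i$ and that $t^\ast$ is well defined, i.e.\ that some $\mu_i$ is positive — which is precisely what the condition $\sum_i\mu_i=0$ with not all $\mu_i$ zero guarantees.
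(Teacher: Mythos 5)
Your proof is correct: it is the classical Carath\'eodory argument (take a representation with a minimal number of points, extract an affine dependence when more than $d+1$ points are used, and shift along it until a coefficient vanishes), and all the delicate points — existence of a strictly positive $\mu_i$, nonnegativity of the perturbed coefficients for both signs of $\mu_i$ — are handled properly. The paper itself does not prove this statement; it only quotes it as a well-known fact (citing Matou\v{s}ek's book), and the proof given there is essentially the one you wrote, so there is nothing to reconcile. One cosmetic remark: the conclusion should read $x\in\textnormal{conv}(\{x_1,\ldots,x_t\})$ rather than $x=\textnormal{conv}(\{x_1,\ldots,x_t\})$, a typo in the paper's statement that your argument implicitly corrects.
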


We give a short and simple proof of Theorem~\ref{first-selection} using, essentially, only Tverberg's Theorem and only for $r = d+1$, that is, only for a set of $d^2+d+1$ points.
\begin{proof}[Proof of Theorem~\ref{first-selection}:]
Let $P$ be a set of $n$ points in general position in $\Re^d$. We can assume that $n$ is large enough, otherwise one can choose $c$ sufficiently small and take a point contained in one $d$-simplex. First, we show that for any  $(d^2+d+1)$-tuple $D$ of $P$ there is a set $S\subset D$ of $d+1$ points and a Tverberg partition of $D\setminus S$ into $d$ sets $D_1,\ldots,  D_{d}$ such that the lexicographic minimum of $\bigcap_{i=1}^d \textnormal{conv}(D_{i})$ belongs to conv($S$).

 Tverberg's Theorem implies that any set of $d(d+1)+1 = d^2+d+1$ points can be partitioned into $d+1$ sets $A_{1}, \ldots, A_{d+1}$ all of which convex hulls have a nonempty intersection. Some sets of the partition might be larger than $d+1$, but using \CARA theorem we can assume that each set contains at most $d+1$ points, since one can move points from sets containing at least $d+2$ points to those containing at most $d$ preserving the nonempty intersection property. Using Lemma \ref{lex-min}, we can
assume without loss of generality that the lexicographic minimum $x$ of $\bigcap_{i=1}^{d+1} \textnormal{conv}(A_i)$ is equal to the lexicographic minimum of $\bigcap_{i=1}^{d} \textnormal{conv}(A_i)$. 

 If $x$ is in the interior of conv($A_j$) for a set $A_j$ with $1\leq j\leq d$ containing $d+1$ points, then it is easily seen that $x$ is also the lexicographic minimum of $\bigcap_{\{1\leq i\leq d \}\setminus \{j\}} \textnormal{conv}(A_{i})$. This means that $x$ is the lexicographic minimum of $\bigcap_{\{1\leq i\leq d+1 \}\setminus \{j\}} \textnormal{conv}(A_{i})$ as well, so we can set $S$ to be $A_j$ and take the remaining $d$ sets as the sets $D_i$.

  Hence, one can assume that $x$ is on the boundary of conv($A_j$) for each $A_j$ containing $d+1$ points. This means that for each of these sets containing $d+1$ points it is possible to remove a point so that the lexicographic minimum of the intersection of the convex hulls of the new sets is still $x$.
We iteratively eliminate such points from the sets $A_i$ for $1\leq i \leq d$ until the union of the new sets 	$A'_i$ contains $d^2$ points. This is possible, since removing one point from each $A_i$ with $1\leq i \leq d$ containing $d+1$ points leads to $d$ sets whose union contains at most $d^2$ points, while  $\bigcup_{i=1}^d A_i$ contains at least $d^2$ points. If we denote by $A$ the set of eliminated points, then $S:=A\cup A_{d+1}$ contains $d+1$ points and $x\in \textnormal{conv}(A_{d+1})\subset \textnormal{conv}(S)$. Hence, setting $D_i:=A'_i$ we are done.

Next, define $\mathcal{T}$ to be the sets of all Tverberg partitions of $d^2$-tuples of $P$ into $d$ sets.  There are exactly ${n \choose {d^2}}$ such tuples and each of them admits at most some constant $b_d$ of Tverberg partitions into $d$ sets. Hence $|\mathcal{T}|\leq b_{d}{n \choose {d^2}}$. By the reasoning above, each $(d^2+d+1)$-tuple of points from $P$ gives rise to a pair $(\Delta,\mathcal{H})$, where $\Delta$ is a $d$-simplex spanned by $P$ and $\mathcal{H} \in \mathcal{T}$ so that $\Delta$ contains the Tverberg point determined by $\mathcal{H}$. Moreover, a pair $(\Delta, \mathcal{H})$ determines the $(d^2+d+1)$-tuple from which it arises. Hence, the number of pairs $(\Delta,\mathcal{H})$ as above is at least ${n \choose {d^2+d+1}}$. Applying the pigeonhole principle, at least one of the Tverberg points determined by a partition from $\mathcal{T}$ is contained in at least  $$\frac{\binom{n}{d^2+d+1}}{b_{d}\binom{n}{d^2}}=\frac{\binom{n}{d+1}\binom{n-d-1}{d^2}}{\binom{d^2+d+1}{d+1}b_d\binom{n}{d^2}}\geq c \frac{\binom{n}{d+1}}{b_d\binom{d^2+d+1}{d+1}}$$ $d$-simplices, where $c\to 1$ when $n\to \infty$. This completes the proof.


\end{proof}

\begin{remark}
For $d=2$, this proof can be made truly elementary in opposite to other known proofs. Indeed, the distinct configurations of $7$ points can analyzed by hand. Once this is done, the proof above is reduced to the application of the pigeonhole principle.
\end{remark}

\begin{remark}
In applications, we sometimes want to know that there is a point piercing many $d$-simplices spanned by $P$ in their interior. Lemma 9.1.2 from \cite{MATOUSEK} guarantees that for a set of $n$ points $P$ as in Theorem \ref{first-selection} any point is on the boundary of at most $O(n^d)$ $d$-simplices spanned by $P$. Hence, Theorem \ref{first-selection} guarantees the existence of a point piercing  $\Omega( \binom{n}{d+1})-O(n^d)$ $d$-simplices in their interior.
\end{remark}

\begin{proof}[Proof of Theorem~\ref{rich-simplex}:]
As established in the previous proof, the number of pairs $(\Delta,\mathcal{H})$ is at least ${n \choose {d^2+d+1}}$. Moreover, if $\mathcal{H}_1$ and $\mathcal{H}_2$ are Tverberg partitions of the same $d^2$-tuple, then at most one pair $(\Delta,\mathcal{H}_1)$, $(\Delta,\mathcal{H}_2)$ is counted. Hence, by the pigeonhole principle, for at least one  $d$-simplex $\Delta$ spanned by $P$ there exist at least $$\frac{\binom{n}{d^2+d+1}}{\binom{n}{d+1}}=\frac{\binom{n}{d^2}\binom{n-d^2}{d+1}}{\binom{d^2+d+1}{d+1}\binom{n}{d+1}}\geq c\frac{\binom{n}{d^2}}{\binom{d^2+d+1}{d+1}}$$
$d^2$-tuples of $P$ each having a Tverberg partition  into $d$ sets determining a Tverberg point that belongs to $\Delta$, where $c\to 1$ when $n\to \infty$.
\end{proof}

\section{Weak $\frac{1}{r}$-net in a Kinetic Setting}

 In a kinetic setting, one needs to capture the combinatorial changes occurring with time. The concept of a \emph{kinetic hypergraph} defined below was introduced in \cite{Ca15} by De Carufel et al.
\begin{definition}[\bf Kinetic Hypergraph]
\it{Let $P$ be a set of points moving in $\mathbb{R}^d$ with bounded description complexity and let $\mathcal{R}$ be a set of ranges. We denote by $(P,\mathcal{S})$ the kinetic hypergraph of $P$ with respect to $\mathcal{R}$. Namely, $S \in \mathcal{S}$ if and only if there exists an $R \in \mathcal{R}$ and a "time" $t \in \mathbb{R}_{+}$ such that $S(t)= R\cap P(t)$. We sometimes abuse the notation, and denote by $(P,\mathcal{R})$ the kinetic hypergraph $(P,\mathcal{S})$.
 }
\end{definition}
Figure 1 illustrates the concept of a kinetic hypergraph for $d=1$ and for $\mathcal{R}$ the family of intervals. De Carufel et al. \cite{Ca15} also established the following important lemma to investigate strong $\frac{1}{r}$-nets in a kinetic setting.

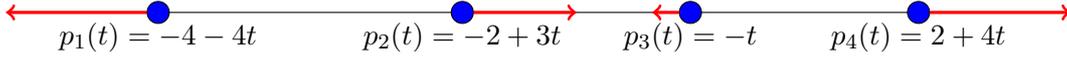
\begin{figure}[t]
\label{fig}
\begin{center}
\begin{tikzpicture}[scale=2]
\draw (-5,0) -- (2,0);

\node[shape=circle, draw, fill=blue,scale=0.8] (critical) at (-4,0){};
\node[below,scale=1]  at (-4,0){$p_1(t)= -4-4t$};
\draw[->,red, very thick]  (critical) to (-5,0) ;

\node[shape=circle, draw, fill=blue,scale=0.8] (britical) at (-2,0){};
\node[below,scale=1]  at (-2,0){$p_2(t)= -2+3t$};
\draw[->,red, very thick]  (britical) to (-1.25,0) ;

\node[shape=circle, draw, fill=blue,scale=0.8] (britical) at (-0.5,0){};
\node[below,scale=1]  at (-0.5,0){$p_3(t)=-t$};
\draw[->,red, very thick]  (britical) to (-0.75,0) ;

\node[shape=circle, draw, fill=blue,scale=0.8] (britical) at (1,0){};
\node[below,scale=1]  at (1,0){$p_4(t)= 2+4t$};
\draw[->,red, very thick]  (britical) to (2,0) ;


\end{tikzpicture}
\end{center}
\caption{A family $P$ moving linearly along the the real line. One can easily see that the kinetic hypergraph of $P$ with respect to intervals is ($P$,$2^{P}\setminus \{\{p_1,p_2,p_4\},\{p_1,p_3,p_4\}, \{p_1,p_4\}\}$).}
\end{figure}

 \begin{lemma}[\bf De Carufel et al. \cite{Ca15}]
 \label{VC}
 \it{Let $\mathcal{R}$ be a collection of semi algebraic sets in $\mathbb{R}^d$, each of which can be expressed as a Boolean combination of a constant number of polynomial equations and inequalities of maximum degree $c$, where $c$ is some constant. Let $P$ be a family of points moving polynomially in $\mathbb{R}^d$ with bounded description complexity. Then the kinetic hypergraph of $P$ with respect to $\mathcal{R}$ has bounded \VC-dimension.}
 \end{lemma}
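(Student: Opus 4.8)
The plan is to reduce the statement to the classical Milnor--Thom / Warren bound on the number of sign patterns realized by a system of bounded-degree polynomials in a fixed number of real variables, exactly in the spirit of bounding the \VC-dimension of ``primal'' range spaces defined by polynomial inequalities. First I would record each range by a parameter vector of bounded dimension. By hypothesis every $R\in\mathcal{R}$ is a fixed Boolean combination of at most some constant number $s$ of polynomial equalities and inequalities in the coordinates $x=(x_1,\dots,x_d)$, each of degree at most $c$. Collecting the coefficients of these polynomials into one vector $a\in\mathbb{R}^{k}$ with $k\le s\binom{d+c}{d}$ (a constant), the membership condition ``$x\in R$'' takes the form $\Phi(x,a)$, where $\Phi$ is a \emph{single} Boolean combination of the signs of polynomials $g_1,\dots,g_s$ in the $d+k$ variables $(x,a)$, each $g_j$ of degree bounded by a constant depending only on $d,c$. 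Hence every $S\in\mathcal{S}$ has the form $S_{a,t}:=\{\,p\in P:\Phi(p(t),a)\,\}$ for some $a\in\mathbb{R}^{k}$ and some $t\ge 0$; allowing $\mathcal{R}$ to be merely a subfamily of all such semialgebraic sets, and restricting to $t\ge 0$, only shrinks the admissible parameter set, which cannot increase the \VC-dimension.

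Next I would substitute the trajectories. Since $P$ moves polynomially with bounded description complexity, for each $p\in P$ every coordinate of $p(t)$ is a polynomial in $t$ of degree at most $\beta$. Plugging $x=p(t)$ into $g_1,\dots,g_s$ produces polynomials $g_1^{p},\dots,g_s^{p}$ in the $D:=k+1$ variables $(a,t)$, each of degree at most some constant $\delta=\delta(d,c,\beta)$; moreover ``$p\in S_{a,t}$'' is the \emph{same} Boolean function $\Phi$ of the signs $\operatorname{sign}\!\big(g_j^{p}(a,t)\big)$, $1\le j\le s$.

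Finally comes the counting argument. Suppose a subset $Y=\{q_1,\dots,q_m\}\subseteq P$ is shattered by $\mathcal{S}$, so for every $Z\subseteq Y$ there is a parameter $(a_Z,t_Z)$ with $S_{a_Z,t_Z}\cap Y=Z$. Consider the system of the $sm$ polynomials $\{\,g_j^{q_i}:1\le i\le m,\ 1\le j\le s\,\}$ in the $D$ variables $(a,t)$, all of degree at most $\delta$. By the Milnor--Thom / Warren bound, the number of distinct sign vectors realized by this system over $\mathbb{R}^{D}$ is polynomial in the number of polynomials with exponent $D$, i.e.\ at most $(C\,s\,m\,\delta)^{D}=O(m^{D})$ with constants depending only on $d,c,\beta$. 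If $(a_Z,t_Z)$ and $(a_{Z'},t_{Z'})$ induce the same sign vector on all of these polynomials, then for every $i$ the membership of $q_i$ in $S_{a_Z,t_Z}$ and in $S_{a_{Z'},t_{Z'}}$ agrees (both equal the Boolean function $\Phi$ of these signs), whence $Z=S_{a_Z,t_Z}\cap Y=S_{a_{Z'},t_{Z'}}\cap Y=Z'$. Thus the $2^m$ subsets $Z$ yield $2^m$ distinct sign vectors, so $2^m=O(m^{D})$, which forces $m\le C'(d,c,\beta)$; hence $(P,\mathcal{S})$ has bounded \VC-dimension.

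I expect the crux to be the first step: pinning down a \emph{uniform}, bounded-dimensional parametrization of the ranges together with a single Boolean template $\Phi$ valid for all of them simultaneously, so that the shattering condition translates into distinctness of sign vectors of one fixed polynomial system; once that is in place the remainder is the black-box sign-pattern estimate. A further subtlety, which does not arise here because $P$ moves polynomially but would in a rational-motion generalization, is clearing denominators so that the $g_j^{p}$ remain polynomial in $(a,t)$; with polynomial motion no such bookkeeping is needed.
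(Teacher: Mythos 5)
Your proposal is correct in substance, but note that the paper does not actually prove Lemma~\ref{VC}: it is imported from De Carufel et al.\ \cite{Ca15}, and your linearization-plus-sign-pattern argument is essentially the scheme used there (parametrize each range by a constant-length coefficient vector $a$, substitute the polynomial trajectories so that membership of $p$ at time $t$ becomes a Boolean function of signs of constant-degree polynomials in $(a,t)$, and then invoke the Milnor--Thom/Warren bound to see that a shattered $m$-set would force $2^m=O(m^{D})$). What the paper itself proves, because it needs rational rather than polynomial motion, is the one-dimensional variant Lemma~\ref{VC-INT}, and it does so by a genuinely different and more elementary route: Lemma~\ref{NHE} bounds the total number of hyperedges of the kinetic interval hypergraph by $O(\beta n^4)$ by cutting the time axis at ``special events'' (coincidences of two points, undefined points, $t=0$) and counting the static hyperedges on each piece, and Lemma~\ref{VC-A} converts this polynomial bound on the primal shatter function into a bounded \VC-dimension. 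The paper's method is hands-on, works verbatim for rational motion, and gives explicit constants, but it is tailored to $d=1$ and intervals; your method handles arbitrary bounded-complexity semialgebraic ranges in $\Re^d$ in one stroke, which is exactly the strength of the cited lemma.

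Two small repairs are needed in your write-up. First, you cannot assume a \emph{single} Boolean template $\Phi$ for all ranges: different $R\in\mathcal{R}$ may combine their $s$ atomic sign conditions by different Boolean formulas. Since there are only a constant number of Boolean functions on $s$ sign atoms, partition $\mathcal{R}$ into constantly many subfamilies with a common template, run your sign-vector count inside each, and conclude either by pigeonhole (some template realizes at least $2^m/B$ of the $2^m$ traces of a shattered set, still forcing $m=O(1)$) or by adding the resulting polynomial shatter-function bounds. Second, because equalities are allowed among the atoms, you should quote the version of the sign-pattern bound that counts sign conditions including zeros (Milnor--Thom, or Basu--Pollack--Roy), not only Warren's bound for strict sign vectors; the bound has the same $\bigl(C\,s\,m\,\delta\bigr)^{D}$ form, so nothing else changes. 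Finally, a cosmetic point: your argument bounds the \VC-dimension of the family of preimages $\{p\in P: p(t)\in R\}$, which is the intended reading of the kinetic hypergraph (points are labeled); this is also the reading used implicitly in the paper's proof of Lemma~\ref{NHE}.
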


Unfortunately, this is not enough for our purposes since we need to assume that the moving points can be described with coordinates which are rational functions. However, following a similar scheme it is not hard to prove the lemma below:
\begin{lemma}
\label{VC-INT}
 \it{Let $P$ be a set of moving points in $\mathbb{R}$ with bounded description complexity, and let $\mathcal{K}=(P, \mathcal{S})$ be the kinetic hypergraph on $P$ with respect to intervals. Then the \VC-dimension of $\mathcal{K}$  is $O(1)$.}
 \end{lemma}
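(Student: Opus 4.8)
The plan is to bound the number of distinct traces that $\mathcal{S}$ cuts on an arbitrary finite set $A\subseteq P$; a polynomial bound on this shatter function immediately forces the \VC-dimension to be $O(1)$, since a shattered set of size $k$ would require $2^k$ distinct traces. So fix $A=\{p_1,\dots,p_m\}\subseteq P$ and write $p_i(t)=f_i(t)/g_i(t)$ with $\deg f_i,\deg g_i\le\beta$.

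First I would subdivide $\mathbb{R}_+$ into ``combinatorially stable'' pieces. Let $E\subset\mathbb{R}_+$ be the finite set consisting of all roots of the denominators $g_i$ together with all roots of the polynomials $f_ig_j-f_jg_i$ for $1\le i<j\le m$. Each such polynomial has degree $O(\beta)$ and there are $O(m^2)$ of them, so $\card{E}=O(\beta m^2)$ and $\mathbb{R}_+\setminus E$ has $O(\beta m^2)$ connected components. On any component $I$ every $p_i$ is defined (no denominator vanishes) and no two of the $p_i$ ever coincide on $I$ (equality $p_i(t)=p_j(t)$ amounts to $t$ being a root of $f_ig_j-f_jg_i$), so by continuity the points $p_1(t),\dots,p_m(t)$ keep a single fixed linear order as $t$ ranges over $I$.

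Next, on one component $I$ with its fixed order $p_{\sigma(1)}(t)<\dots<p_{\sigma(m)}(t)$, any interval $R$ (bounded, a ray, or all of $\mathbb{R}$) meets $A(t)$ in a set of the form $\{p_{\sigma(i)},\dots,p_{\sigma(j)}\}$ (or $\varnothing$), and as a subset of $A$ this trace does not depend on $t\in I$; hence $I$ contributes at most $\binom{m+1}{2}+1=O(m^2)$ distinct traces. Each of the $O(\beta m^2)$ exceptional times in $E$ contributes another $O(m^2)$: even when several $p_i(t)$ coincide or are undefined there, an interval still selects a contiguous block of the occupied positions, hence all points sitting on those positions, which is $O(m^2)$ possibilities. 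Summing, $\card{\mathcal{S}|_A}=O(\beta m^2)\cdot O(m^2)=O(\beta m^4)$. Consequently, if $\mathcal{K}$ shattered some $A$ with $\card{A}=k$ then $2^k\le C\beta k^4$ for an absolute constant $C$, which fails once $k$ exceeds a bound depending only on $\beta$; thus the \VC-dimension of $\mathcal{K}$ is $O(1)$.

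I expect the only mildly delicate point to be the bookkeeping at the exceptional times in $E$, where trajectories cross or blow up — but since an interval can never separate two points occupying the same location, contiguity (and hence the $O(m^2)$ bound at each such time) survives, so the count stays polynomial. An alternative route, closer to the proof of Lemma~\ref{VC}, is to observe that each curve $t\mapsto p_i(t)$ meets a horizontal line $y=a$ in $O(\beta)$ points, so the arrangement of the $m$ trajectories together with the two lines $y=a$ and $y=b$ has $O(\beta^2 m^2)$ faces, one per possible trace, yielding the same polynomial shatter function and the same conclusion.
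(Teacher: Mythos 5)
Your proposal is correct and follows essentially the paper's own route: the paper likewise bounds the number of hyperedges induced on $m$ points by $O(\beta m^4)$ (Lemma~\ref{NHE}, whose ``special events'' are exactly your roots of the denominators and of the polynomials $f_ig_j-f_jg_i$, with the static $O(m^2)$ count per time window), observes that this bound passes to induced subhypergraphs, and then applies the shatter-function-to-\VC-dimension step (Lemma~\ref{VC-A}) that you carry out inline via $2^k\le C\beta k^4$. The only detail to polish is the degenerate case $f_ig_j-f_jg_i\equiv 0$ (points that coincide for all $t$), which must be excluded from the root-taking exactly as the paper does, noting that such points can never be separated by an interval, so the contiguous-block count is unaffected.
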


First, we need the following link between the \emph{primal shatter function} and the \VC-dimension of a set system.
\begin{definition}
\it{For a set system $X=(P,\mathcal{S})$ the primal shatter function $\pi_{X}:\{1, \ldots, |P|\}\to \mathbb{N}$ is defined by $$\pi_{X}(m)= \max_{A\subset P :~ |A|=m}|\{ A\cap S : S \in \mathcal{S}\}|.$$}
\end{definition}
 The lemma below is a folklore:

\begin{lemma}
\label{VC-A}
\it{Let $X=(P,\mathcal{S})$ be a set system such that $\pi_{X} (m) \leq cm^k$ (for $k\geq 2$ say), where $c$ is some constant, and let $d$ be the \VC-dimension of $X$. Then $d=O(k\log k)$.}
 \end{lemma}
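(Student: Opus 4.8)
\textbf{Plan for the proof of Lemma~\ref{VC-A}.}
The statement is the standard fact that a polynomial bound on the primal shatter function forces an (almost matching) bound on the VC-dimension; the only content is to convert ``$\pi_X(m)\le cm^k$'' into ``$d = O(k\log k)$''. The plan is to use the definition of the VC-dimension directly together with a counting inequality. By definition, if $A\subset P$ with $|A|=d$ is shattered, then $\{A\cap S: S\in\mathcal{S}\}=2^A$, so this family has exactly $2^d$ members. On the other hand, by the definition of the primal shatter function applied with $m=d$, we have $|\{A\cap S:S\in\mathcal{S}\}|\le \pi_X(d)\le cd^k$. Hence any shattered set of size $d$ forces the inequality
\begin{equation}\label{eq:vc-A-key}
2^d \le c\,d^k .
\end{equation}
So it suffices to show that \eqref{eq:vc-A-key} implies $d=O(k\log k)$ (with the implied constant depending on $c$).

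First I would take logarithms (base $2$) in \eqref{eq:vc-A-key} to get $d \le \log_2 c + k\log_2 d$. The task is then purely analytic: solve the transcendental inequality $d \le \log_2 c + k\log_2 d$ for an upper bound on $d$ in terms of $k$ (treating $c$ as a fixed constant and $k\ge 2$). The clean way is to substitute a candidate bound $d \le C k\log_2 k$ for a suitable absolute constant $C=C(c)$ and verify it is consistent: if $d > C k\log_2 k$, then since $x/\log_2 x$ is increasing for $x$ large, we would get $d/\log_2 d > (C k\log_2 k)/\log_2(C k\log_2 k)$, and for $k\ge 2$ the denominator is $O(\log_2 k)$ (more precisely $\log_2 C + \log_2 k + \log_2\log_2 k = O(\log k)$), so $d/\log_2 d$ would exceed $k$ once $C$ is chosen large enough in terms of $c$ — contradicting $d/\log_2 d \le k + \log_2 c/\log_2 d \le k + \log_2 c$. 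Tracking the constants carefully (and disposing of the finitely many small values of $d$ and $k$ separately) yields $d\le C(c)\, k\log_2 k$, i.e.\ $d=O(k\log k)$, as claimed. The hypothesis $k\ge 2$ is used precisely to ensure $\log_2 k$ is bounded below by a positive constant so the bound $C k\log k$ is meaningful and the estimates on $\log_2(Ck\log_2 k)$ go through.

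The main (and only real) obstacle is the bookkeeping in this elementary inequality manipulation — making sure the chain of inequalities is monotone in the right direction and that the absolute constant absorbs the additive $\log_2 c$ term and the $\log\log k$ lower-order terms without circularity. There is no combinatorial or geometric difficulty: the Sauer–Shelah-type input is replaced here by the cheaper observation that a shattered set of size $d$ directly produces $2^d$ distinct traces, which the shatter-function bound caps at $cd^k$. I would present \eqref{eq:vc-A-key} as the crux and then state the resolution of $2^d\le cd^k\Rightarrow d=O(k\log k)$ as a short self-contained computation, citing it as folklore if a cleaner reference is preferred.
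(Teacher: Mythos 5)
Your proposal is correct and follows essentially the same route as the paper: the crux in both is that a shattered set of size $d$ forces $2^d \le \pi_X(d) \le cd^k$, after which one solves this inequality for $d$. The only difference is cosmetic bookkeeping in that last step (the paper passes through $2^d \le (cd)^k$, hence $d \le k\log(cd)$, and a $\sqrt{d}$ lower bound on $d/\log(cd)$, while you verify the candidate bound $d \le Ck\log k$ by contradiction), and both computations go through.
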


\begin{proof}
If $d=0$ then there is nothing to show. Otherwise, $\pi_{X}(d)$ is defined, and we easily see that $c\geq 2$ since $\pi_{X}(1)=2$. Hence, by the definition of the primal shatter function and the lower bound on $c$, the following inequalities are satisfied $2^d \leq cd^k \leq (cd)^k$. This implies that $d\leq k\log cd$. Obviously, there is a $c'>0$ (depending only on $c$) such that $$c'd^{\frac{1}{2}}\leq \frac{d}{\log cd}\leq k.$$
Hence, $$d\leq k\log \frac{c}{{c'}^2}k^2=k\log \frac{c}{{c'}^2}+2k\log k=O(k\log k).$$

\end{proof}


The following definition provides a natural tool to count the number of hyperedges in a kinetic hypergraph.
\begin{definition}

\it{Given a set $I \subset \mathbb{R}_+$, we define the restriction of a kinetic hypergraph $(P, \mathcal{R})$ to $I$, and denote it by $(P, \mathcal{R})|_{I}$. It is the  hypergraph $(P,\mathcal{E}')$
 where $$\mathcal{E}':=\{P'\subset P: \exists t\in I~\textnormal{and}~ R\in \mathcal{R}~\textnormal{with}~ P'(t)=P(t)\cap R\}.$$}
 \end{definition}
Assume that $\mathcal{I}$ is a partition of $\mathbb{R}_{+}$. If $m_{I}$ denotes the number of hyperedges in $(P,\mathcal{R})|_{I}$, then clearly the number of hyperedges in $(P,\mathcal{R})$ is at most $\sum_{I\in\mathcal{I}}m_{I}$. If $t>0$ then we say that $p_i,p_j$ become \emph{coincident} at $t$ if $p_i(t)=p_j(t)$ and for some $\delta >0$ we have $p_i(\tilde{t})\neq p_j(\tilde{t})$ for $\tilde{t} \in [t-\delta, t[ $. It is easy to see that in the case of motion with bounded description complexity if $p_i(t)=p_j(t)$, but $p_i$ and $p_j$ do not become coincident at $t$, then $p_i(t)=p_j(t)$ for all $t \in \mathbb{R}_{+}$. For technical reasons, the lemma below holds without any general position assumption.

\begin{lemma}
\label{NHE}
\it{Let $P$ be a set of $n\geq 1$ points moving in $\mathbb{R}$ with bounded description complexity $\beta$. Then the number of hyperedges in the kinetic hypergraph $\mathcal{K}=(P,\mathcal{S})$ with respect to intervals is at most $(10\beta +4) n^4$.}
\end{lemma}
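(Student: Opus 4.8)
The plan is to bound the number of hyperedges by counting, over a suitable partition of $\mathbb{R}_+$ into time intervals, how many distinct intervals-intersections can occur. On any time interval where no pair of points becomes coincident, the left-to-right order of the $n$ moving points along $\mathbb{R}$ is fixed; on such an interval the hyperedges with respect to intervals are exactly the ``contiguous blocks'' in this fixed order, of which there are $\binom{n}{2} + n + 1 = O(n^2)$. So the whole count reduces to bounding the number of pieces in the partition, i.e. the number of ``coincidence events'' among the points.

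The key steps, in order: (1) Recall from the discussion preceding the lemma that for motion with bounded description complexity, if $p_i(t) = p_j(t)$ but the pair does not become coincident at $t$, then $p_i \equiv p_j$ identically; hence only genuine coincidence events matter, and identically-equal points can be merged without changing $\mathcal{S}$. (2) For a fixed pair $p_i, p_j$ with $p_i = P_i/Q_i$ and $p_j = P_j/Q_j$, where all of $P_i,Q_i,P_j,Q_j$ have degree $\le \beta$, the equation $p_i(t) = p_j(t)$ amounts to $P_i Q_j - P_j Q_i = 0$, a polynomial of degree at most $2\beta$, so there are at most $2\beta$ times at which this pair is coincident (when the pair is not identically equal). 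One should also account for the times where a denominator $Q_i$ vanishes — at most $\beta$ such times per point — since a point may be undefined there; across all $n$ points this contributes at most $\beta n$ extra breakpoints. (3) Summing over all $\binom{n}{2}$ pairs gives at most $2\beta \binom{n}{2} \le \beta n^2$ coincidence events, plus $\beta n$ events from vanishing denominators, so the number of intervals in the partition $\mathcal{I}$ is at most $\beta n^2 + \beta n + 1$. (4) On each such interval $I$, the order of the points (restricted to those currently defined) is fixed, so $m_I \le \binom{n}{2} + n + 1 \le n^2$ for $n \ge 1$ (using that $\binom{n}{2}+n+1 \le n^2$ holds once $n$ is not too small, and checking the tiny cases directly, or simply bounding crudely). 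Multiplying, the total number of hyperedges is at most $(\beta n^2 + \beta n + 1)\cdot n^2 \le (10\beta + 4) n^4$, absorbing the lower-order terms into the constant.

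The main technical care — rather than a deep obstacle — will be handling points that are temporarily undefined (denominator zero) and making sure the bookkeeping of breakpoints is honest: one must decide on a partition of $\mathbb{R}_+$ whose pieces are the maximal open intervals between consecutive ``events,'' where an event is either a pairwise coincidence or a denominator-zero of some coordinate, and argue that on the interior of each piece both the set of defined points and their linear order are constant, so the trace on that piece is a sub-hypergraph of the $O(n^2)$-edge hypergraph of contiguous blocks. A second point to get right is the crude arithmetic: $\binom{n}{2}+n+1 = \frac{n^2+n+2}{2} \le n^2$ fails for $n=1$ but there the whole hypergraph has $\le 2$ edges and the bound $(10\beta+4)$ is trivially enough, so one either states the lemma's constant generously (as the authors do) or treats $n \le $ (small constant) separately. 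Everything else is a direct degree count for a polynomial of degree $\le 2\beta$, which is why only Tverberg-free, elementary tools are needed.
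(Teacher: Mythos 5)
Your overall strategy is the same as the paper's: cut $\mathbb{R}_+$ at the ``special events'' ($t=0$, pairwise coincidences, zeros of denominators), bound the number of events by $O(\beta n^2)$ via the degree-$2\beta$ polynomial $P_iQ_j-P_jQ_i$ for each pair, and use the static bound $\binom{n}{2}+n+1$ on each event-free piece (your ``fixed order, contiguous blocks'' argument is a cleaner phrasing of the paper's per-piece bound). The degree count, the merging of identically equal points, and the final arithmetic are all fine.

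There is, however, one genuine gap: your partition consists only of the \emph{open} intervals between consecutive events, so your inequality ``total number of hyperedges $\le$ (number of open pieces) $\times$ $\left(\binom{n}{2}+n+1\right)$'' only accounts for hyperedges realized at some non-event time, and you never argue that every hyperedge is of this kind. It need not be. Take $p_1\equiv -1$, $p_2\equiv 1$ and $p_3(t)=\frac{t-2}{(t-2)(t^2+1)}$ (description complexity $3$): for every $t\neq 2$ we have $p_3(t)=\frac{1}{t^2+1}\in(0,1]$, so any interval containing $p_1(t)$ and $p_2(t)$ also captures $p_3$, whereas at $t=2$ the given denominator vanishes, $p_3$ is undefined under the paper's convention, and $R=[-1,1]$ realizes the hyperedge $\{p_1,p_2\}$ at that instant and at no other time. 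Such a hyperedge lives only at an event instant and is invisible in every open piece. (Whether one regards this singularity as ``removable'' is a matter of convention, but the paper's accounting is deliberately agnostic: it counts up to $\beta$ undefinedness times per point and, crucially, treats every special event time as a zero-length interval, which is the source of the factor $2$ in its bound $2\bigl(n\beta+1+2\beta\binom{n}{2}\bigr)\bigl(\binom{n}{2}+n+1\bigr)$. For coincidence instants a continuity argument does show nothing new appears, but you would have to supply it; for undefinedness instants, and for $t=0$ if your first piece excludes it, it is delicate or false.) The repair is immediate: include each event instant as a degenerate piece, to which the same static bound applies; this at most doubles your count, and $2(\beta n^2+\beta n+1)\bigl(\binom{n}{2}+n+1\bigr)\le(10\beta+4)n^4$ still holds for all $n\ge 1$, so the stated constant survives. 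As written, though, the multiplication step does not bound the total number of hyperedges.
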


\begin{proof}

Let $p_{1}, \ldots,  p_{n}$ be the points in $P$. First, we show that if a hyperedge  appears at some $t$, then one of the following conditions holds:
\begin{itemize}
\item[i)]$t=0$,
\item[ii)]at least 2 points  become coincident at $t$,
\item[iii)] one of the moving points is not defined at $t$.
\end{itemize}

Let us assume that when a hyperedge $S$ appears all points are defined and $t\neq 0$. This implies that their motion is continuous  on $[t-\delta, t+ \delta]$ for a $\delta >0$ small enough.  
For a hyperedge $E$ present at some time $t'$, there is an interval $[p_{i}(t'),p_{j}(t')]$ such that $[p_{i}(t'),p_{j}(t')]\cap P(t')=E(t')$. Clearly, if a hyperedge appears at $t$, a moving point from $P$ must enter or leave an interval as above at $t$. Hence, two points become coincident implying that we are indeed in one of the three cases above.

Let $0=t_0<t_1<t_2\ldots$ be the sequence of times 
 corresponding to one of three cases listed above. We refer to these times as \emph{special events}. In the second case, if $p_{i}(t)=a_{i}(t)/b_{i}(t)$ and $p_{j}(t)=a_{j}(t)/b_{j}(t)$ become coincident, then  $a_{i}(t)b_{j}(t)=a_j(t)b_i(t)$.  As we already observed, the polynomials $a_{i}(x)b_{j}(x)$ and $a_j(x)b_i(x)$ are not identical (otherwise $p_i(t)=p_j(t)$ for all $t\geq 0$) and of degree at most $2\beta$. Hence, the equality is satisfied for at most $2\beta$ values of $x$.  Since the number of special events where at least one of the $p_j$ is not defined is clearly at most $n\beta$, it follows that the total number of special events is at most $1+n\beta + \binom{n}{2}2 \beta$.

  It is not hard to see that the number of hyperedges in the static case is at most $\binom{n}{2}+n+1$ (one should not forget $\{\emptyset\}$!). Since no special event can occur  in the interval $]t_i,t_{i+1}[$ (where $t_{i+1}=\infty$ if $i$ is the index of the last special event), it follows from the reasoning above that the number of hyperedges in $\mathcal{K}|_{]t_i, t_{i+1}[}$ is at most $\binom{n}{2}+n+1$ as well. Indeed, if the restricted hypergraph contains at least $\binom{n}{2}+n+2$ hyperedges, then there is an interval $[a,b]\subset]t_i,t_{i+1}[$ such that $\mathcal{K}|_{[a,b]}$ also contains at least $\binom{n}{2}+n+2$ hyperedges. However, considering the motion of points of $P$ starting from $a$ and applying the reasoning above, we conclude that the hyperedges of $\mathcal{K}|_{[a,b]}$ may appear only at $a$, hence their number is at most $\binom{n}{2}+n+1$, a contradiction.

  Since the total number of intervals, of zero and non-zero length, in the partition given by the special events is at most $2(n\beta + 1 +\binom{n}{2}2 \beta)$, the total number of hyperedges in the kinetic hypergraph is at most $$2(n\beta + 1 +\binom{n}{2}2 \beta)(\binom{n}{2}+n+1)\leq (10\beta +4) n^4$$
as asserted.
\end{proof}


It is easy to see that the bound on the number of hyperedges above is also valid for induced hypergraphs.  Consider an induced hypergraph $(X,\mathcal{S}|_{X})$ of $ (P, \mathcal{S})$, and let $A=S\cap X$ be a hyperdge of  $(X,\mathcal{S}|_{X})$ arising from some $S\in \mathcal{S}$. By definition, there is an interval $[a,b]$ and a $t\geq 0$ such that $P(t)\cap [a,b]=S(t)$. We now show that $[a,b]\cap X(t)= A(t)$. Clearly, $A(t)\subset [a,b]$ otherwise for some $a\in A$ we have $a(t)\not\in S(t)$ implying $a \not \in S$, hence $A(t)\subset [a,b]\cap X(t)$.

Let us prove $[a,b]\cap X(t)\subset A(t)$. Take an $x(t)\in X(t)\cap [a,b]$, then clearly $x \in S$ implying $x\in S\cap X= A$, so $x(t) \in A(t)$. This proves that the induced hypergraph $(X, \mathcal{S}|_{X})$ is contained in the kinetic hypergraph of $X$ with respect to intervals, hence the bound of Lemma \ref{NHE} holds for induced hypergraphs that have at least one vertex.

\begin{proof}[Proof of Lemma \ref{VC-INT}]
The lemma is an immediate corollary of Lemma~\ref{VC-A} combined with Lemma~\ref{NHE}.
\end{proof}

Combined with the well known strong $\frac{1}{r}$-net theorem mentioned in Section~\ref{intro}, Lemma \ref{VC-INT} implies:

\begin{lemma}
\label{Lemma3}
 \it{Let $P$ be a set of points moving in $\mathbb{R}$ with bounded description complexity. Then the kinetic hypergraph of $P$ with respect to intervals has a strong $\frac{1}{r}$-net (for $r\geq 2$ say) of size $O(r\log r)$.}
\end{lemma}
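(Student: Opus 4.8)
The plan is to read the statement off from Lemma~\ref{VC-INT} together with the Haussler--Welzl strong net theorem quoted in Section~\ref{intro}. The first step is to record the (essentially trivial) reduction to an abstract statement: a strong $\frac{1}{r}$-net of the kinetic hypergraph $\mathcal{K}=(P,\mathcal{S})$, regarded as a set system on the finite vertex set $P$, is by definition a subset $N\subseteq P$, and as such it is automatically a set of moving points with the same bounded description complexity as $P$. So it suffices to prove the combinatorial size bound for such an $N$.

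Second, I would invoke Lemma~\ref{VC-INT}, which provides an absolute bound $d_0=O(1)$ on the \VC-dimension of $\mathcal{K}$ (here $d_0$ depends only on the description complexity $\beta$, which is treated as fixed). Applying the Haussler--Welzl theorem to the set system $\mathcal{K}$ on $n=\card{P}$ vertices with $\epsilon=1/r$ then yields a subset $N\subseteq P$ of size $O(d_0\, r\log r)$ that meets every hyperedge $S\in\mathcal{S}$ with $\card{S}>n/r$; since $d_0$ is a constant, this size is $O(r\log r)$, with the implied constant depending only on $\beta$.

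Finally I would spell out that this abstract net behaves correctly in the kinetic sense. Fix $t\in\mathbb{R}_{+}$ and an interval $I$ with $\card{I\cap P(t)}>n/r$; set $S:=\{p\in P: p(t)\in I\}$, which is a hyperedge of $\mathcal{K}$ witnessed by the range $I$ at time $t$. Every $p\in S$ satisfies $p(t)\in I$, so $S(t)\subseteq I\cap P(t)$ and $\card{S}\ge\card{S(t)}=\card{I\cap P(t)}>n/r$. By the defining property of $N$ there is some $p\in N\cap S$, and then $p(t)\in I\cap N(t)$, so $I\cap N(t)\neq\emptyset$, as required. The only point that needs a moment's care — and the closest thing to an obstacle in an otherwise immediate argument — is that $P$ is not assumed to be in general position, so a hyperedge $S$ and its cross-section $S(t)$ may have different cardinalities; but the inequality $\card{S}\ge\card{S(t)}$ always holds, which is exactly what the above deduction uses, so no real difficulty arises.
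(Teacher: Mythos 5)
Your argument is correct and is exactly the paper's proof: Lemma~\ref{VC-INT} gives a constant bound on the \VC-dimension of the kinetic hypergraph, and the Haussler--Welzl theorem quoted in Section~\ref{intro} then yields a strong $\frac{1}{r}$-net of size $O(r\log r)$. Your extra verification that such an abstract net pierces every heavy interval at every time $t$ (via $\card{S}\ge\card{S(t)}$, with no general position needed) is a harmless and correct elaboration of how the lemma is later used.
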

It is important to note that the lemma above holds without any general position assumption. Hence, for any $t\geq 0$ more than two moving points from $P$ can coincide at $t$. Later on, we shall use Lemma \ref{Lemma3} in order to find weak $\frac{1}{r}$-nets in a kinetic setting.

The proof of Theorem~\ref{MT} below is inspired by a construction from Chazelle et al. \cite{Ch95}. We recall the general position assumption made in Section~\ref{intro}: Given a set of moving points $P$ in $\mathbb{R}^d$, for any $t\geq 0$ the affine hull of any $d$-tuple of points in $P(t)$ is a hyperplane, and no $d+2$ points of $P(t)$ are contained in a hyperplane. All arguments of the theorem below are also valid when the set $P$ consist of points with bounded description complexity. However, as explained in the first section, when the motion is polynomial the construction we present has an important feature: one coordinate is a polynomial. In particular, when $d=2$ the construction below gives a kinetic weak $\frac{1}{r}$-net $N$ of size only $O(r^{3}\log^2 r)$ and the first coordinate of each point in $N$ is a polynomial. Note that in the static setting, the best known upper bound on the function $f(2,r)$, defined in Section~\ref{intro}, is $O(r^2)$, so our bound is only an $O(r\log^2 r)$ factor of it.

\begin{theorem}[\bf Weak $\frac{1}{r}$-net in a Kinetic Setting]
\label{MT}
\it{Let $P$ be a set of $n$ points moving polynomially in $\mathbb{R}^d$ with bounded description complexity $\beta$. Then there exists a kinetic weak $\frac{1}{r}$-net (for $r\geq 2$ say) $N$ of size $O(r^{\frac{d(d+1)}{2}}\log^{d} r)$ and bounded description complexity. Moreover, the first coordinate of each point of $N$ is a polynomial.}
\end{theorem}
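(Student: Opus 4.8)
The plan is to induct on the dimension $d$. For the base case $d=1$ a kinetic weak $\frac{1}{r}$-net is literally the same object as a kinetic strong $\frac1r$-net for intervals, so Lemma~\ref{Lemma3} already supplies one of size $O(r\log r)$; moreover the Haussler--Welzl net \cite{HaWe87} underlying Lemma~\ref{Lemma3} may be taken to be a subset of $P$, so when $P$ moves polynomially every net point is a point of $P$ and its single coordinate is a polynomial, as required. This also settles the description-complexity bookkeeping at the bottom of the recursion.

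For the inductive step I would transcribe the static recursion of Chazelle et al.~\cite{Ch95} into the kinetic setting. First build a kinetic strong $\frac{1}{c}$-net $R\subseteq P$ with respect to halfspaces for a suitable constant $c=c(d)$: by Lemma~\ref{VC} the kinetic hypergraph of $P$ with respect to halfspaces has bounded \VC-dimension, so $|R|=O(1)$, and, being a static subset of $P$, $R(t)$ is a strong $\frac1c$-net of $P(t)$ for \emph{every} $t$. Take the family $\mathcal H$ of moving hyperplanes spanned by $d$-tuples of $R$ (these have bounded description complexity, even though their coefficients are merely rational) and form a kinetic analogue of a $(1/r)$-cutting of $\mathcal H$: a decomposition of $\mathbb R^d$, valid on each interval between consecutive critical times, into $O(r^d)$ relatively open cells each met by at most $|\mathcal H|/r$ of the moving hyperplanes; iterating the construction $O(\log r)$ times (a hierarchical cutting) yields $O(r^d\log r)$ cells in total. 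Inside a cell $\sigma$ the points of $P$ lying in $\sigma$ see no spanned hyperplane of $R$, which is precisely the configuration in which one projects onto $\{x_1=0\}$ and recurses: apply the theorem inductively to the projected moving point set to obtain a kinetic weak $\frac1r$-net $N'_\sigma\subset\mathbb R^{d-1}$, and then \emph{lift} each $q\in N'_\sigma$ to a moving point of $\mathbb R^d$ whose last $d-1$ coordinates are those of $q$ and whose first coordinate is copied from the first coordinate of a suitable point of $P$ --- concretely, a one-dimensional kinetic strong net (Lemma~\ref{Lemma3}) of the first coordinates of the points of $P$ lying in the relevant vertical slab of the cutting. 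Since $P$ moves polynomially, a first coordinate of a point of $P$ is a polynomial, so every lifted net point has a polynomial first coordinate; and since the coordinates of a lifted point are obtained by composing coordinates of points of $P$ with the bounded-degree rational functions produced by the recursion, a uniform bound $g(d,\beta)$ on description complexity follows by induction on $d$.

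The size bookkeeping then reads $c(r,d,\beta)=O(r^d\log r)\cdot c(r,d-1,\beta)$ with base value $O(r\log r)$, which solves to $O\bigl(r^{1+2+\cdots+d}\log^d r\bigr)=O\bigl(r^{d(d+1)/2}\log^d r\bigr)$; for $d=2$ this is the $O(r^3\log^2 r)$ claimed. Correctness is the kinetic transcription of the static argument: given $t$ and a convex $C$ with $|C\cap P(t)|>n/r$, the strong-net property of $R(t)$ together with the cutting produces a cell $\sigma$ inside which $C\cap P(t)$ is still heavy after projection, so $N'_\sigma$ is hit by some $q(t)$ in the projection of $C$; a \CARA/convexity argument lifts $q(t)$ to a point of $C$ on the corresponding vertical fiber, and the one-dimensional strong net in that slab then supplies a lifted point of $N$ inside $C$.

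I expect the main obstacle to be the kinetic cutting and its interaction with the recursion: one must show that a single combinatorial $(1/r)$-cutting structure --- $O(r^d)$ cells, each a Boolean combination of a bounded number of polynomial inequalities of bounded degree --- can be maintained while the hyperplanes of $\mathcal H$ move, changing only $\mathrm{poly}(n,\beta,r)$ times, so that Lemmas~\ref{VC}, \ref{VC-INT} and \ref{Lemma3} keep applying at every level of the recursion and the description complexity never accumulates a dependence on $n$ or $r$. The second delicate point, exactly as in the static proof, is the localization step: proving that heaviness of $C$ at time $t$ genuinely descends to a $(d-1)$-dimensional heavy-convex subproblem inside one of the cells, uniformly in $t$; the general position assumption (no $d+2$ points of $P(t)$ on a common hyperplane) is what rules out the pathological coincidences that would break this step.
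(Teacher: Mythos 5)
Your route (constant-size kinetic halfspace net, a kinetic $\frac{1}{r}$-cutting of the spanned moving hyperplanes, then recursion on dimension via projection and lifting) is genuinely different from the paper's construction, but as written it has two gaps that are not bookkeeping issues --- they are the actual content of the theorem. First, the ``kinetic cutting'' is not constructed, and it is exactly the object the paper's argument is designed to avoid: the only kinetic range space for which the paper (or anything it proves) supplies nets is the one-dimensional case of intervals (Lemmas~\ref{NHE}, \ref{VC-INT}, \ref{Lemma3}). Maintaining a decomposition into moving cells with bounded description complexity, polynomially many combinatorial changes, and compatibility with a dimension recursion is an unproved claim, and note also that with $R$ of constant size the family $\mathcal H$ of spanned hyperplanes is constant, so for $r>|\mathcal H|$ a $\frac1r$-cutting of $\mathcal H$ is just the arrangement of $O(1)$ cells; the factor $O(r^d\log r)$ per level in your size recursion is therefore not grounded in the construction you describe (it happens to reproduce the exponent $1+2+\cdots+d$, but that is reverse-engineered rather than derived). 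The paper instead builds the net coordinate by coordinate: axis-orthogonal moving flats of ``step $j$'' obtained from one-dimensional kinetic strong $\frac{\lambda_j}{r^j}$-nets for the projections of the moving intersection points $F_h$ (affine hulls of $(j+1)$-tuples of $P$ meeting the current flat $h$), with Lemma~\ref{TP} controlling description complexity via Cramer's rule; this is what makes the first coordinate a projection of a point of $P$, hence a polynomial.

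Second, your correctness step does not close. If $C$ is heavy at time $t$ and the recursive net in $\mathbb R^{d-1}$ gives a point $q(t)$ in the projection of $C$, Carath\'eodory indeed gives \emph{some} point of $C$ on the vertical fiber over $q(t)$, but you need a point of the \emph{constructed} net $N$ there. You propose to take the first coordinate from a one-dimensional strong net of the first coordinates of points of $P$ in a slab, yet there is no argument that this net hits the particular interval $\{x_1:(x_1,q(t))\in C\}$: that interval need not contain many of those projected first coordinates, so the strong-net property gives nothing. This localization is precisely the hard part, and the paper resolves it by proving, for the fixed time $t$, that some moving flat of step $j$ meets $\Omega(n^{j+1}/r^{j+1})$ $j$-simplices spanned by $C\cap P(t)$ once in their relative interior (Lemma~\ref{T1} for $j=1$, then the selection-type Lemma~\ref{TLC} together with Lemma~\ref{Lemma9} and the $y_1,y_2$ pigeonhole argument for the inductive step); only this quantitative ``many simplices pierce the flat'' statement makes the relevant interval on the next axis a heavy hyperedge of the interval hypergraph, so that Lemma~\ref{Lemma3} applies. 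Your proposal has no analogue of Lemma~\ref{TLC} or of this heaviness transfer, so the claim that the final lifted point lies in $C$ is unsupported. To salvage your approach you would have to either prove a kinetic cutting theorem with bounded description complexity or replace the cutting by the paper's one-dimensional-net mechanism --- at which point you are essentially reproducing the paper's proof.
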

\begin{proof}

The case $d=1$ is implied by Lemma \ref{Lemma3}, so we can assume that $d\geq 2$. The method below works for $n\geq cr$, where $c$ is a sufficiently large constant whose existence is proved later. If $n< cr$ then the theorem holds trivially, since one defines the kinetic weak $\frac{1}{r}$-net to be $P$.

We start by defining $N$ and other structures we need throughout the proof. Later, we show that $N$ is indeed a kinetic weak $\frac{1}{r}$-net for $P$. The claims regarding the size and the description complexity of $N$ will follow easily from its definition. First, we need to introduce the concept of a \emph{moving space of step $j$} for $1\leq j \leq d$. It will be some specific moving space of dimension $d-j$. Moreover, a moving space of step $i+1$ arises from some moving space of step $i$, hence these structures will be defined iteratively. In what follows, we use parameters $\lambda_{1},\ldots, \lambda_{d}$ with $0<\lambda_i\leq 1$, whose values are specified later.

Call the projection of $P$  onto $x_1$-axis $P_1$. Note that $P_1$ has description complexity $\beta$. Choose a strong $\frac{\lambda_1}{r}$-net $N_1$ for the kinetic hypergraph of $P_1$ with respect to intervals. Lemma \ref{Lemma3} guarantees that one can select $N_1$ with $|N_1|\leq b_1 r/\lambda_1\log r/\lambda_1$, where $b_1$ depends on $\beta$. For each point $p$ of $N_1$, we consider the moving hyperplane such that at any $t\geq 0$ it is orthogonal to $x_1$-axis and passes through $p(t)$. The moving affine spaces of step $1$ are exactly these moving hyperplanes arising from $N_1$.

 The construction of the moving spaces of step at least 2 is more involved. Assume that we have constructed the moving affine spaces up to step $j$ satisfying  $1 \leq j\leq d-1$. For each moving space $h$ of step $j$, we define $F_h$ to be the set consisting of moving points $p^{h,X}$ for all $(j+1)$-tuples $X$ of $P$. The position of  $p^{h,X}$ at $t\geq 0$ is given by $p^{h,X}(t)=\textnormal{aff}(X(t))\cap h(t)$ if this intersection contains a single point. A moving point $p^{h,X}$ is not necessarily uniquely defined, but this not a problem for our purposes. One can define it with description complexity $f(j+1)$ for some increasing function $f : \{1,\ldots, d\} \to \mathbb{N}$ such that $f(1)=\beta$. The technical proof of this fact is provided later in Lemma \ref{TP}. Next, for each moving space $h$ of step $j$ call the projection of $F_h$ onto $x_{j+1}$-axis $P_h$. Note that $P_h$ also has description complexity $f(j+1)$. Choose a strong $\frac{\lambda_{j+1}}{r^{j+1}}$-net $N_{h}$ for the kinetic hypergraph of $P_h$ with respect to intervals. Again, Lemma \ref{Lemma3} ensures that one can select $N_{h}$ with $|N_{h}|\leq b_{j+1}r^{j+1}/\lambda_{j+1} \log r^{j+1}/\lambda_{j+1}$, where $b_{j+1}$ depends on $f(j+1)$. If $N_{h}$ consists of $q_{1},\ldots, q_{s}$, then the moving affine spaces of step $j+1$ induced by $h$  are $\tilde{h}_{i}$ given by $x_1=x_{h,1}, \ldots ,~x_j=x_{h,j},~x_{j+1}=q_{i}$ for $1\leq i \leq s$, where $x_{h,k}$ is the moving point giving the $k^{th}$ coordinate of $h$. The set of moving spaces of step $j+1$ is the union of moving spaces induced by $h$ among all moving spaces $h$ of step $j$.

We define the kinetic weak $\frac{1}{r}$-net $N$ to be the union of the moving spaces of step $d$.  This makes sense, since the moving spaces of step $d$ have each coordinate specified by some function, so those are moving points. The size of $N$ is at most
$$b_{1}\frac{r}{\lambda_1}\log\frac{r}{\lambda_1} b_{2}\frac{r^2}{\lambda_2}\log \frac{r^2}{ \lambda_2} \ldots b_{d}\frac{r^d}{\lambda_d} \log \frac{r^d}{ \lambda_d}=O( r^{\frac{d(d+1)}{2}}\log^d r).$$
Moreover, for each $s=(s_1,\ldots, s_d)$ of $N$, the moving point $s_i$ has description complexity $f(i)$. Since $f$ is an increasing function, the moving point $s$ has description complexity $f(d)$.


First, we briefly outline the the main ideas of the proof for $d\geq 3$. The case $d=2$ is much easier, and does not require the inductive step presented below.  For the sake of simplicity, some arguments below are not entirely correct.

 Let $t\geq 0$ and let $C$ be a convex set containing $>n/r$ points of $P(t)$. We start by showing that if one chooses an appropriate value for $\lambda_1$, then for some moving space $h$ of step $1$ the set $h(t)$ intersects ``many" segments spanned by $C\cap P(t)$.

  Next comes the inductive step. We assume that $\lambda_i$ were defined up to some $1\leq j\leq d-2$, and some moving space $h$ of step $j$ (of dimension $d-j$) is such that $h(t)$ intersects a ``large" number of $j$-simplices spanned by $C\cap P(t)$. We first find a static affine space $s$ contained in $h(t)$ of dimension $d-j-1$ such that $s$ intersects a ``large" number of $(j+1)$-simplices spanned by $C\cap P(t)$. These $(j+1)$-simplices are obtained from the $j$-simplices intersecting $h(t)$.  Then we show that with an appropriate choice of $\lambda_{j+1}$, there are two  moving spaces $h_1$, $h_2$ of step $j+1$  induced by $h$ such that $h_1(t)$ and $h_2(t)$ are ``close" to $s$ and therefore at least one of them also intersects a ``large" number of $(j+1)$-simplices spanned by $C\cap P(t)$, which completes the inductive step.

  This way, we establish that one can define $\lambda_i$  for $1\leq i\leq d-1$, so that for some moving line $l$ of step $d-1$ ``many" $(d-1)$-simplices spanned by $C\cap P(t)$ are intersected by $l(t)$. In particular, from the definition of $F_l$ the segment $C\cap l(t)$  is such that for ``many" moving points $p\in F_l$ the point $p(t)$ belongs to it. Hence, the projection of $C\cap l(t)$ (call it $I$) onto $x_d$-axis leads to a ``heavy" hyperedge in the kinetic hypergraph of $P_l$ with respect to intervals (because $P_l$ is the projection of $F_l$). For an appropriate choice of $\lambda_d$ there is a point $q$ of the net $N_l$ such that $q(t)$ must be in $I$. Finally, by construction of $N$ the moving point whose first $d-1$ coordinates are given by $l$ and the last one by $q$ is in $N$, so $q(t)$ is in $C$ and we are done.

We now proceed with a detailed proof. Let us show that the set $N$ we defined is indeed a kinetic weak $\frac{1}{r}$-net for $P$ for an appropriate choice of $\lambda_i$. Let $t\geq 0$ and let $C$ be any convex set containing at least $n/r$ points from $P(t)$. It is sufficient to assume that $C$ contains exactly $n/r$ points of $P(t)$ (we choose any $n/r$ points of $C\cap P(t)$, and disregard the remaining ones). We will define the parameters $\lambda_i$ so that $C$ must contain a point of $N(t)$. It is important to notice that these parameters do not depend on $C$ or $t$. For technical reasons,  for $1\leq j\leq d-1$ we also prove the existence of $\gamma_j n^{j+1}/r^{j+1}$  $j$-simplices spanned by $C\cap P(t)$ and intersecting  $h(t)$ for some moving space $h$ of step $j$ exactly once in their relative interior, where $\gamma_j>0$ are iteratively defined later. Clearly, this implies for each simplex above that the affine hull of the $j+1$ points of $P(t)$ spanning it intersects $h(t)$ exactly once as well. In particular, if $h(t)$ intersects $\gamma_j n^{j+1}/r^{j+1}$  $j$-simplices spanned by $C\cap P(t)$ once in their relative interior, then for at least  of $\gamma_j n^{j+1}/r^{j+1}$ points $p\in F_h$ we have $p(t)\in C\cap h(t)$. This implication is crucial for our purposes, and will be used in order to prove that $N$ is a kinetic net once the parameters $\lambda_i$ are specified. We prove the existence of $ \gamma_j$ and define $\lambda_j$ for $1\leq j\leq d-1$ by induction. Then, we define $\lambda_d$ and show that the values $\lambda_j$ imply that $N$ is a kinetic weak $\frac{1}{r}$-net.
\begin{lemma}
\label{T1}
If $\lambda_1=1/4$ and $n> 4r(2d+2)$, then there exists a moving hyperplane $h$ of step 1 such that $h(t)$ intersects at least $n^2/16r^2$ segments spanned by  $C\cap P(t)$ once and in their relative interior.
\end{lemma}
\begin{proof}

 Among moving spaces $\tilde{h}$ of step 1 with the property  that  $> n/4r$ points of $C\cap P(t)$ have a strictly smaller $x_1$-coordinate than the intersection of $\tilde{h}(t)$ with $x_1$-axis, choose a moving space with the smallest intersection point with $x_1$-axis at $t$ and denote it by $h$. The moving space $h$ exists, since the above defined set of moving spaces is easily seen to be nonempty.

 Indeed, let $z$ be the largest real such that at most $n/4r$  points of $C\cap P(t)$ have their $x_1$-coordinate in $]-\infty, z[$. Then, since from the general position assumption at most $d+2$ points of $C\cap P(t)$ can share the same $x_1$-coordinate, we deduce that there are at least $n/r-n/4r-d-2> n/4r$ points of $C\cap P(t)$ whose $x_1$-coordinate is in $]z, \infty[$. Since $N_1$ is a strong $\frac{1}{4r}$-net for the kinetic hypergraph of $P_1$ with respect to intervals, there should be a point $w\in N_1$ such that $w(t)\in ]z, \infty[$ implying the existence of a moving space of step 1 whose intersection with $x_1$-axis at $t$ is $w(t)$. Hence, the above set of moving spaces is indeed nonempty, so $h$ exists.

  Let $x$ denote the intersection point of $h(t)$ with $x_1$-axis. We now show that we also have at least $n/4r$ points of $C\cap P(t)$ having a strictly bigger $x_1$-coordinate than $x$. Indeed, using one more time the hypothesis that no $d+2$ points are contained in a hyperplane, we deduce that the number of points of $C\cap P(t)$ having their $x_1$-coordinate smaller or equal to $x$ is at most $2n/4r + 2(d+1)< 3n/4r$.

  To see this, let $\tilde{h}(t)$ be a predecessor of $h(t)$, i.e., a moving hyperplane of step 1 at $t$ whose intersection point with $x_1$-axis is the biggest one among those having an intersection point with  $x_1$-axis strictly smaller than $h(t)$. The existence of such a hyperplane is again implied by the the definition of $N_1$. Indeed, we have  $>n/4r$ points $p\in P_1$ such that  $p(t)\in ]-\infty,x[$, so there is a point $w\in N_1$ such that $w(t)\in ]-\infty,x[$. Thus, there is moving hyperplane of step $1$ with its $x_1$-coordinate equal to $w(t)$, which implies the existence of $\tilde{h}(t)$. Similarly, it is easily seen that at most $n/4r$ points of $C\cap P(t)$ are strictly between $h(t)$ and $\tilde{h}(t)$. In summary, by the choice of $h$, at most $n/4r$ points of $C\cap P(t)$ have their $x_1$-coordinate strictly smaller than the intersection of $\tilde{h}(t)$ with $x_1$-axis, at most $n/4r$ are strictly between $h(t)$ and $\tilde{h}(t)$, and at most $d+1$ points lie on each of $h(t), \tilde{h}(t)$.

  Hence, both open halfspaces delimited by $h(t)$ contain $>n/4r$ points of $C\cap P(t)$. This means that at least $n^2/16r^2$ segments spanned by  $C\cap P(t)$ intersect $h(t)$ once and in their relative interior, so the lemma follows.
\end{proof}

\begin{figure}[t]
\label{Fig}
\begin{center}
\begin{tikzpicture}[scale=0.7]

\draw (-8,0) -- (5.5, 0);

\filldraw[fill=blue!50,fill opacity=0.6] (-5.8, 7.3) -- (-5.4, 8) -- (-4.6,8.5) -- (-3.5,8.9)  -- (-2.4, 9) -- (0,8.6) -- (1.8,8.2)  --(3.2,4.9) -- (0.8,3.8) -- (-0.8,3.6) -- (-3.5, 4.5) -- (-5,5.6) -- cycle;
\node[color=blue,scale=1.5] at (-4.5,9) {C};

\draw[red]  (-6,0) -- (-6,10);
\node[shape=circle, draw, fill=black, scale=0.4] at (-6,5)  {};

\coordinate (A) at (-5.5,7.6);
\coordinate (B) at (-4.8,6.2);
\coordinate (C) at (-3.8,7.2);
\coordinate (D) at (-3.4,6);
\coordinate (E) at (-3.2,7.1);
\node[shape=circle, draw, fill=black, scale=0.4] at (-5.5,7.6)  {};
\node[shape=circle, draw, fill=black, scale=0.4] at (-4.8,6.2)  {};
\node[shape=circle, draw, fill=black, scale=0.4] at (-3.8,7.2)  {};
\node[shape=circle, draw, fill=black, scale=0.4] at (-3.4,6)  {};
\node[shape=circle, draw, fill=black, scale=0.4] at (-3.2,7.1)  {};
\coordinate (a) at (-6,8);
\coordinate (b) at (-6,3);

\node[shape=circle, draw, fill=red, scale=0.4] at (a)  {};
\node[shape=circle, draw, fill=red, scale=0.4] at (b)  {};
\node[shape=circle, draw, fill=red, scale=0.4] at (-3,3.4)  {};
\node[shape=circle, draw, fill=red, scale=0.4] at (-3,1)  {};
\draw[red]  (-2,0) -- (-2,10);
\coordinate (F) at (-2.8,5.3);
\coordinate (G) at (-2.7,8.7);
\coordinate (H) at (-2.5,7.9);

\node[shape=circle, draw, fill=black, scale=0.4] at (-2,9.5)  {};

\node[shape=circle, draw, fill=black, scale=0.4] at (F)  {};
\node[shape=circle, draw, fill=black, scale=0.4] at (G)  {};
\node[shape=circle, draw, fill=black, scale=0.4] at (H)  {};

\path[name path=fg, red, draw]  (-3,0) -- (-3,10);
\path[name path=ab, red, draw]  (-2,0) -- (-2,10);

\coordinate (I) at (-1.3,5.6);
\coordinate (J) at (-1.1,8.2);
\coordinate (K) at (-0.6,6.8);
\coordinate (L) at (-0.1,7.6);
\coordinate (M) at (0,5.2);
\path[name path=sk, red, dashed, draw]   (H) -- (J);
\draw [name intersections={of=ab and sk, by=t}] node[shape=circle,fill=red,draw, scale=0.4] at (t) {};
\node[shape=circle, draw, fill=black, scale=0.4] at (I) {};
\node[shape=circle, draw, fill=black, scale=0.4] at (J) {};
\node[shape=circle, draw, fill=black, scale=0.4] at (K) {};
\node[shape=circle, draw, fill=black, scale=0.4] at (L)  {};
\node[shape=circle, draw, fill=black, scale=0.4] at (M) {};

\node[shape=circle, draw, fill=black, scale=0.4] at (-2.5,1.4)  {};
\node[shape=circle, draw, fill=black, scale=0.4] at (4,2.2)  {};
\node[shape=circle, draw, fill=black, scale=0.4] at (3,3.2)  {};
\node[shape=circle, draw, fill=black, scale=0.4] at (2.8,7.2)  {};

\node[shape=circle, draw, fill=black, scale=0.4] at (1.5,8.8)  {};

\draw[red]  (0.5,0) -- (0.5,10);

\node[shape=circle, draw, fill=black, scale=0.4] at (0.5,2.3)  {};

\coordinate (N) at (0.7,6.5);
\coordinate (O) at (0.8,5.4);
\coordinate (P) at (1.5,7.2);
\coordinate (Q) at (1.9,4.7);
\node[shape=circle, draw, fill=black, scale=0.4] at (N) {};
\node[shape=circle, draw, fill=black, scale=0.4] at (O)  {};
\node[shape=circle, draw, fill=black, scale=0.4] at (P) {};
\node[shape=circle, draw, fill=black, scale=0.4] at (Q)  {};
\coordinate (t) at (0.5,5.9);
\node[shape=circle, draw, fill=red, scale=0.4] at (t)  {};
\draw[red,dashed] (K) -- (P);
\coordinate (j) at (-2,6.533);
\draw[red,dashed] (P)-- (j);
\node[shape=circle, draw, fill=red, scale=0.4] at (j)  {};

\draw[red]  (2.1,0) -- (2.1,10);
\coordinate (R) at (2.3,6.7);
\coordinate (S) at (2.8,5.3);
\node[shape=circle, draw, fill=black, scale=0.4] at (R) {};
\node[shape=circle, draw, fill=black, scale=0.4] at (S)  {};

\coordinate (k) at (2.1,8);
\node[shape=circle, draw, fill=red, scale=0.4] at (k) {};
\draw[red]  (3.5,0) -- (3.5,10);
\node[shape=circle, draw, fill=black, scale=0.4] at (2.8,5)  {};
\node[shape=circle, draw, fill=black, scale=0.4] at (-2.3,4.7)  {};
\coordinate (m) at (3.5,4);
\node[shape=circle, draw, fill=red, scale=0.4] at (m) {};
\node[red, below] at (-2,0) {$h(t)$};

\end{tikzpicture}
\end{center}
\caption{The line $h(t)$ splits $C\cap P(t)$ into two parts of cardinality $>n/4r$. At least one point from the net $N$ induced by $h$ must be in $C$ at ``time" $t$.}
\end{figure}
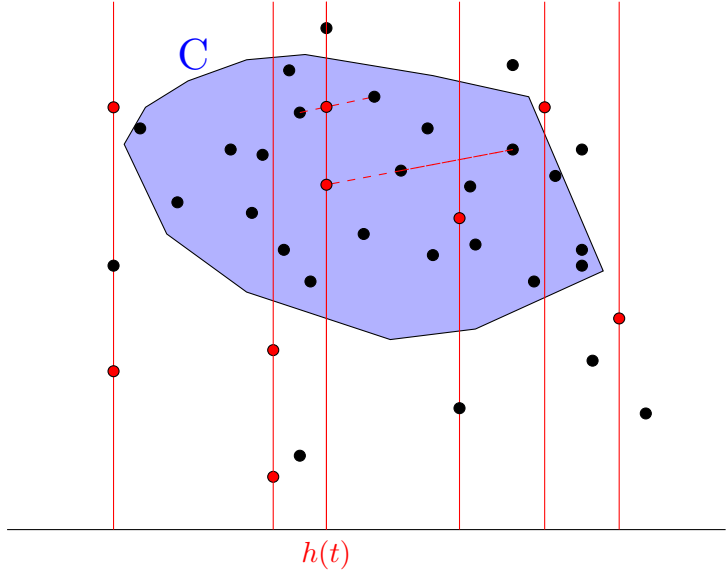

The lemma above implies that if we define $\lambda_1= 1/4$, then we can set $\gamma_{1}=1/16$. If $d=2$ then set $\lambda_2= 1/8$. Indeed, by definition of $F_{h}$ and Lemma \ref{T1} there exist at least $n^2/16r^2> \binom{n}{2}/8r^2=|F_{h}|/8r^2$ points $p$ of $F_h$ such that $p(t)\in C\cap h(t)$.
Since $P_{h}$ is the projection of $F_{h}$ onto $x_2$-axis, there exist $>|P_h|/8 r^2$  points $p'$ of $P_h$ such that $p'(t)$ belongs to the projection of the segment $C\cap h(t)$ onto $x_2$-axis. Hence, since $N_h$ is a strong $\frac{1}{8r^2}$-net for the kinetic hypergraph of $P_h$ with respect to intervals, the projection of $C\cap h(t)$ onto $x_2$-axis must contain a point $v(t)$ of $N_{h}(t)$. By definition of $N$, the moving point $q=(x_{h,1},v)$ is in $N$, so $C\cap N(t) \neq \emptyset$ and the case $d=2$ follows, see Figure 2 for an illustration. Hence, one can assume that $d\geq 3$.

In higher dimensions the analysis requires more efforts. We need the following lemma implicitly established by Chazelle et al. in \cite{Ch95}. The technical proof is postponed to the end of this section.
\begin{lemma}[\bf Chazelle et al. \cite{Ch95}]
\label{TLC}
\it{Let $d\geq 3$ and $P\subset \mathbb{R}^d$ be a set of $n/r$ points such that any $d$ points of $P$ are affinely independent. Assume that we have an affine space $h$ given by $x_1=a_1, \ldots, x_j=a_j$ with $1\leq j \leq d-2$, and a set $\mathcal{S}$ of at least $\alpha_j n^{j+1}/r^{j+1}$ $(j+1)$-tuples of $P$ with $\alpha_{j}>0$ such that the corresponding simplices intersect $h$ exactly once. Then given $n\geq 4(j+1)r/\alpha_j$, there is an $\alpha_{j+1}>0$ and an affine space  $x_1= a_{1}, \ldots , x_j=a_{j}, x_{j+1}=a_{j+1}$ intersecting at least $\alpha_{j+1}n^{j+2}/r^{j+2}$ $(j+1)$-simplices spanned by $(j+2)$-tuples from $P$. Moreover, each such $(j+2)$-tuple has the form $\{p_1,\ldots , p_{j+1}\}\cup\{p_1, \ldots p_{j}, q_{1}\}$ for  $\{p_1, \ldots p_{j+1}\}$, $\{p_1, \ldots p_{j}, q_{1}\}\in \mathcal{S}$. Finally, $a_{j+1}\in [\{p_1, \ldots p_{j+1}\}, \{p_1, \ldots p_{j}, q_{1}\}]$, where by abuse of notation $\{p_1, \ldots p_{j+1}\}$ is the projection of the intersection point of the corresponding $j$-simplex with $h$ onto $x_{j+1}$-axis.}
\end{lemma}

 Assume that we have defined $\lambda_i$, $\gamma_i$ for $i\leq j$, where $1 \leq j\leq d-2$. Let $h$ be a moving space of step $j$ such that at least  $\gamma_j n^{j+1}/r^{j+1}$ $j$-simplices spanned by $C\cap P(t)$ intersect $h(t)$ once in their relative interior. Let us assume that $n\geq 4(j+1)r/\gamma_j$. In what follows, we use the same notation as in the statement of Lemma \ref{TLC}.  By this lemma (used with $\alpha_{j}=\gamma_{j}$, the affine space $h(t)$, and the set of points $C\cap P(t)$), we get a point $a_{j+1}$ contained in at least $\alpha_{j+1} n^{j+2}/r^{j+2}$ intervals  $[\{p_1(t), \ldots p_{j+1}(t)\}, \{p_1(t), \ldots p_{j}(t), q_{1}(t)\}]$ as in the statement of Lemma \ref{TLC}. This is true, because we distinguish two intervals that do not arise from the same pair of $(j+1)$-tuples. We sometimes refer to the projection $\{p_1(t), \ldots, p_{j+1}(t)\}$ as a vertex.

 Set $J=\{x_{\tilde{h},j+1}(t):\tilde{h}  \textnormal{ is a moving space induced by } h \}$. Let $y_1$ be the biggest  $a\in J$ smaller or equal to $a_{j+1}$ (if no such $a$ exists take $-\infty$).
Similarly, let $y_2$ be the smallest  $a\in J$ bigger or equal to $a_{j+1}$(if no such $a$ exists take $\infty$). The following lemma shows that by an appropriate choice of  $\lambda_{j+1}$, not many intervals as above can lie strictly between $y_1$ and $y_2$.
\begin{lemma}
If $\lambda_{j+1}=2\alpha_{j+1}/3(j+1)$ then at most $\alpha_{j+1}n^{j+2}/3r^{j+2}$ intervals as above are strictly contained in $]y_1, y_2[$ on $x_{j+2}$-axis.
\end{lemma}

\begin{proof}
 By contradiction, assume that $\geq \alpha_{j+1}n^{j+2}/3r^{j+2}$ intervals are contained in $]y_1 ,y_2[$. In what follows, we distinguish two vertices arising from different $(j+1)$-tuples. Counted with multiplicities, there are at least $\geq 2\alpha_{j+1}n^{j+2}/3r^{j+2} $  vertices $\{p_{1}(t), \ldots , p_{j+1}(t)\}$ in $]y_1,y_2[$. Each vertex  $\{p_{1}(t), \ldots , p_{j+1}(t)\}$ is counted at most $(j+1) n/r$ times, since there are at most $j+1$ choices of $\{p_{i_1}(t),\ldots, p_{i_j}(t)\}\subset \{p_{1}(t),\ldots ,p_{j+1}(t)\}$ and at most $n/r$ choices for $q(t)$ so that $[\{p_{1}(t),\ldots, p_{j+1}(t)\},$ $ \{p_{i_1}(t), \ldots, p_{i_j}(t), q(t)\}]$ is an interval as above. Hence, there are at least $\geq 2\alpha_{j+1} n^{j+1}/3(j+1)r^{j+1} $ distinct vertices in $]y_1,y_2[$, a contradiction with the value of $\lambda_{j+1}$.

 To see this, we recall that each vertex $\{p_1(t), \ldots, p_{j+1}(t)\}$ is the projection of $p^{h,\{p_1,\ldots,p_{j+1}\}}(t)$ onto $x_{j+1}$-axis for $p^{h,\{p_1,\ldots,p_{j+1}\}} \in F_h$. Since the number of vertices $\{p_1(t), \ldots, p_{j+1}(t)\}$ in $]y_1,y_2[$ is at least $\geq 2\alpha_{j+1} n^{j+1}/3(j+1)r^{j+1}$, the number of $p^{h,\{p_1,\ldots,p_{j+1}\}} \in F_h$ such that the projection of $p^{h,\{p_1,\ldots,p_{j+1}\}}(t)$ onto $x_{j+1}$-axis is in $ ]y_1,y_2[$ is obviously also $\geq 2\alpha_{j+1} n^{j+1}/3(j+1)r^{j+1}$.  Hence, by definition of $P_{h}$ the number of $p\in P_{h}$ such that $p(t)\in]y_1,y_2[ $ is at least $$\frac{2\alpha_{j+1} n^{j+1}}{3(j+1)r^{j+1}}=\frac{\lambda_{j+1} n^{j+1}}{r^{j+1}}> \frac{\lambda_{j+1}\binom{n}{j+1}}{r^{j+1}} = \frac{\lambda_{j+1}|P_{h}|}{r^{j+1}}. $$ Thus, since $N_h$ is a strong $\frac{\lambda_{j+1}}{r^{j+1}}$-net for the kinetic hypergraph of $P_h$ with respect to intervals, there should be a point $w \in N_{h}$ such that $w(t)$ is in $]y_{1},y_{2}[$. This means that there is a moving affine space induced by $h$ whose $x_{j+1}$-coordinate at $t$ $w(t)$ is strictly  between $y_1$ and $y_2$, which contradicts the definition of $y_1$ or $y_2$.
\end{proof}

 Let us set $\lambda_{j+1}=2\alpha_{j+1}/3(j+1)$. By the pigeonhole principle and the lemma above, $y_1$ or $y_2$ belongs to at least $ \alpha_{j+1}n^{j+2}/3 r^{j+2} $ intervals as above (say w.l.o.g. $y_1$). Let us denote by $h_1$ a moving space induced by $h$ such that the $x_{j+1}$-coordinate of $h_1(t)$ is $y_1$. Thus,  at least $ \alpha_{j+1}n^{j+2}/3 r^{j+2} $ $(j+1)$-simplices spanned by $C\cap P(t)$ intersect $h_1(t)$. One needs to be careful, since some of these simplices may intersect $h_1(t)$ more than once or not in their relative interior. However, assuming that $n\geq c_{\alpha_{j}/3} r$ where $c_{\alpha_{j}/3}$ is as in Lemma \ref{Lemma9} one can apply this lemma to conclude that at least $\alpha_{j+1}n^{j+2}/6 r^{j+2} $ of them intersect $h_1(t)$ only once and in their relative interior. Hence, setting $\gamma_{j+1}=\alpha_{j+1}/6$ completes the induction.

 Note that we still need to define $\lambda_d$. Let us set $\lambda_d= \gamma_{d-1}$. It remains us to see that the resulting $N$ is a kinetic weak $\frac{1}{r}$-net for $P$. From the definition of $\gamma_{d-1}=\lambda_d$, we know that some affine space $h(t)$ where $h$ is a moving space of step $d-1$, i.e. a moving line of step $d-1$, must intersect at least $\lambda_d n^d/r^d  > \lambda_d\binom{n}{d}  /r^d = \lambda_d|F_{h}|  /r^d $ $(d-1)$-simplices spanned by $C\cap P(t)$ once in their relative interior. By definition of $F_h$, this implies that there exist $>\lambda_d|F_{h}|  /r^d$ points $p$ of $F_{h}$ such that $p(t)$ belongs to the segment $ C\cap h(t)$. Since $P_{h}$ is the projection of $F_{h}$ onto $x_d$-axis, there exist $>\lambda_d|P_h|/r^d$  points $p'$ of $P_h$ such that $p'(t)$ belongs to the projection of the segment $C\cap h(t)$ onto $x_d$-axis. Hence, since $N_h$ is a strong $\frac{\lambda_{d}}{r^d}$-net for the kinetic hypergraph of $P_h$ with respect to intervals, the projection of $C\cap h(t)$ onto $x_d$-axis must contain a point $v(t)$ of $N_{h}(t)$. By definition of $N$, the moving point $q=(x_{h,1}, \ldots, x_{h,d-1},v)$ belongs to it. Hence, $q(t)\in C$.  Thus, $N$ is a kinetic weak $\frac{1}{r}$-net for $P$, and the theorem follows.


\end{proof}
\begin{lemma}
\label{TP}
\it{Let $P$ be a set of points moving polynomially in $\mathbb{R}^d$ with bounded description complexity $\beta$. Let $\{p_1,\ldots, p_{j+1}\}$ be a $(j+1)$-tuple from $P$ and $h$ some moving affine space of step $j$, as defined in the proof of Theorem \ref{MT}. Then one can define a moving point $p$ such that for each $t\geq 0$ when the intersection of \textnormal{aff($\{p_{1}(t),\ldots, p_{j+1}(t)\}$)} and $h(t)$ is a single point, it is equal to $p(t)$. Moreover,  $p$ has description complexity $f(j+1)$, where $f: \{1,\ldots, d\} \to \mathbb{N}$ is some increasing function with $f(1)=\beta$.}
\end{lemma}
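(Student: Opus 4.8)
The plan is to write the intersection point down explicitly via Cramer's rule, and then bound the degrees of the resulting rational functions by induction on $j$. Rename the tuple $\{p_1,\dots,p_{j+1}\}$ to $\{x^{(1)},\dots,x^{(j+1)}\}$ to avoid a clash with coordinate indices, and write $h$ as $x_1=x_{h,1},\dots,x_j=x_{h,j}$. A point $y$ lies in $\textnormal{aff}(X(t))$ iff $y=\sum_{i=1}^{j+1}\mu_i x^{(i)}(t)$ for scalars with $\sum_i\mu_i=1$, and it additionally lies in $h(t)$ iff its first $j$ coordinates equal $x_{h,1}(t),\dots,x_{h,j}(t)$. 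So $y\in\textnormal{aff}(X(t))\cap h(t)$ precisely when $\mu$ solves the $(j+1)\times(j+1)$ system $M(t)\mu=b(t)$, where the rows of $M(t)$ are the vectors $(x^{(1)}_k(t),\dots,x^{(j+1)}_k(t))$ for $k=1,\dots,j$ together with the all-ones row, and $b(t)=(x_{h,1}(t),\dots,x_{h,j}(t),1)^{\top}$. I would define $p(t)$ by Cramer's rule: $\mu_i(t)=\det M_i(t)/\det M(t)$ and $p(t)=\sum_i\mu_i(t)x^{(i)}(t)$, left undefined on the zero set of the denominator. This $p(t)$ always lies in $\textnormal{aff}(X(t))\cap h(t)$ when it is defined (it is an affine combination of the $x^{(i)}(t)$ meeting the $j$ constraints cutting out $h$), so the only thing to check is that it \emph{is} defined whenever that intersection is a single point. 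When the $x^{(i)}(t)$ are affinely independent distinct barycentric tuples give distinct points, so a single intersection point forces a unique solution of $M(t)\mu=b(t)$, i.e. $\det M(t)\neq 0$; and under the general position hypothesis of Theorem~\ref{MT} any $d$ points of $P(t)$ are affinely independent, hence so are the $j+1\le d$ points $x^{(i)}(t)$, which covers every relevant $t$ (the remaining freedom being exactly the ambiguity already tolerated in the construction of Theorem~\ref{MT}).

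Next I would do the degree bookkeeping, by induction on $j$, maintaining alongside the construction in the proof of Theorem~\ref{MT} the hypothesis that every coordinate function of a step-$j$ moving space is a quotient of polynomials of degree at most $f(j)$, with $f(1)=\beta$ (clear for step $1$, whose unique coordinate function is a coordinate of a point of $P$). Each entry of $M(t)$ is either a coordinate of a point of $P$ (a polynomial of degree $\le\beta$) or a constant, so $\det M(t)$ is a polynomial of degree $\le j\beta$. Expanding $\det M_i(t)$ along the $b$-column and clearing the common denominator $D(t)$ of $x_{h,1}(t),\dots,x_{h,j}(t)$ (of degree $\le jf(j)$) yields $\det M_i(t)=N_i(t)/D(t)$ with $\deg N_i\le j\beta+jf(j)$. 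Hence every $\mu_i(t)$ can be written over the \emph{single} denominator $D(t)\det M(t)$, of degree $\le j\beta+jf(j)$, with numerator of the same degree bound; and then $p(t)=\big(\sum_i N_i(t)x^{(i)}(t)\big)/\big(D(t)\det M(t)\big)$ has each coordinate a quotient of polynomials of degree $\le j\beta+jf(j)+\beta$. Setting $f(j+1):=jf(j)+(j+1)\beta$ closes the induction; $f$ is $\mathbb{N}$-valued and strictly increasing since $f(j+1)\ge f(j)+\beta>f(j)$.

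I expect the main obstacle to be precisely this last bookkeeping — not its intrinsic difficulty, but getting it clean. The key observation that keeps the bound from blowing up is that all $j+1$ Cramer quotients $\mu_i$ share the common denominator $D(t)\det M(t)$, so forming $\sum_i\mu_i x^{(i)}$ costs only one extra additive $\beta$ in the numerator degree rather than multiplying the denominator degree by $j+1$. A secondary, minor point is the one flagged above: ensuring $p(t)$ genuinely equals the intersection point for \emph{every} $t$ at which that intersection is a single point, which the general position assumption of Theorem~\ref{MT} handles for us.
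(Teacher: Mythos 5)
Your proposal is correct and follows essentially the same route as the paper: write the intersection point as an affine combination of $p_1(t),\dots,p_{j+1}(t)$, solve the resulting $(j+1)\times(j+1)$ system (whose last row enforces $\sum_i\alpha_i=1$) via Cramer's rule, use the general position assumption to justify invertibility whenever the intersection is a single point, and induct on the step to control description complexity. The only difference is that you carry out the degree bookkeeping explicitly (obtaining the recursion $f(j+1)=jf(j)+(j+1)\beta$), whereas the paper is content to note that the complexity depends only on $j$ and $f(j)$.
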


\begin{proof}
The case where for each $t\geq 0$ the intersection of \textnormal{aff}($\{p_{1}(t),\ldots, p_{j+1}(t)\}$) and $h(t)$ is empty or contains more than one point is trivial, since one can define $p$ to be static.

 Hence, one can assume that for some $t\geq 0$ the intersection above contains a single point. We prove the lemma by induction on the step. Observe that the function defining the first coordinate of a moving space of step $i$ is obtained by projection of some point from $P$, hence has description complexity  $\beta=f(1)$.

 Asssume that the lemma holds for moving points arising from moving spaces of step at most $0\leq j-1\leq d-2$. Let $p_1, \ldots, p_{j+1}$ be any $(j+1)$-tuple of points from $P$ and $h$ any moving space of step $j$ and given by $x_1 = x_{h,1}, \ldots, x_{j}= x_{h,j}$. Then it follows from the definition of $x_{h,i}$ (see Theorem \ref{MT}), the induction hypothesis, and the observation above that $x_{h,i}$ has description complexity $f(i)$. Assume $h(t)$ and aff$(\{p_{1}(t), \ldots , p_{j+1}(t)\})$ intersect in a unique point $p(t)$. Then we can write  $p(t)=\alpha_{1}(t)p_{1}(t)+\ldots + \alpha_{j+1}(t)p_{j+1}(t)$ and from the general position assumption the points $p_1(t),\ldots, p_{j+1}(t)$ are affinely independent, so a point of aff$(\{p_1(t),\ldots p_{j+1}(t)\})$ is uniquely determined by an affine combination of the points $p_i(t)$. An immediate consequence from the unicity of $\alpha_i(t)$ is the following matricial equivalence:

\begin{center}
\[  \left( \begin{array}{ccc}
[p_1(t)]_1 & \ldots & [p_{j+1}(t)]_1   \\

\vdots &   & \vdots   \\

[p_1(t)]_{j} &  \ldots & [p_{j+1}(t)]_{j}   \\

1 & \ldots & 1   \\

 \end{array} \right)
 \left( \begin{array}{c}
\alpha_1(t)  \\

\vdots    \\

\vdots \\

\alpha_{j+1}(t)

 \end{array} \right)
=
 \left( \begin{array}{c}
x_{h,1}(t)  \\

\vdots    \\

x_{h,j}(t)  \\

1

 \end{array} \right)\]

$$\Longleftrightarrow$$
 \[ \left( \begin{array}{c}
\alpha_1(t)  \\

\vdots    \\

\vdots \\

\alpha_{j+1}(t)

 \end{array} \right)
=
 \left( \begin{array}{ccc}
[p_1(t)]_1 & \ldots & [p_{j+1}(t)]_1   \\

\vdots &   & \vdots   \\

[p_1(t)]_{j} &  \ldots & [p_{j+1}(t)]_{j}   \\

1 & \ldots & 1   \\

 \end{array} \right)^{-1}
 \left( \begin{array}{c}
x_{h,1}(t)  \\

\vdots    \\

x_{h,j}(t)  \\

1

 \end{array} \right)
 \]

\end{center}


It follows from the Cramer's rule that the moving point $\alpha_i$, whose position at any $t\geq 0$ is $\alpha_i(t)$ given by the equation above has  description complexity depending only on $j$ and $f(j)$. Hence, the moving point $p$ whose position at $t$ is $\alpha_{1}(t)p_{1}(t)+\ldots + \alpha_{j+1}(t)p_{j+1}(t)$ also has  description complexity depending only on $j$ and $f(j)$ that we denote by $f(j+1)$ (w.l.o.g. $f(j+1)\geq f(j)$). This completes the proof.
\end{proof}

\begin{lemma}
\label{Lemma9}
\it{Let $1\leq j \leq d-1$ and $P\subset \mathbb{R}^d$ be a set of $n/r$ points such that no $d+2$ of them lie in a hyperplane. Assume that we have a set $\mathcal{S}$ of $\alpha n^{j+1}/r^{j+1}$ $(j+1)$-tuples from $P$ such that the convex hull of each of them intersects a given affine space $V$ of dimension $d-j$. Then there exists $c_{\alpha}$  such that if $n\geq c_{\alpha}r$, then there are at least $\alpha n^{j+1}/2r^{j+1}$ $(j+1)$-tuples from $\mathcal{S}$ such that their convex hulls intersect $V$ exactly once and in their relative interior.}
\end{lemma}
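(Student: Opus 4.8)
The plan is to pass to the projection that collapses $V$, and there turn everything into a transversality count. Let $W\subset\Re^d$ be the linear space of directions of $V$ (so $\dim W=d-j$) and let $\pi:\Re^d\to\Re^j$ be the projection along $W$, so that $\pi(V)$ is a single point $v_0$ and $\pi^{-1}(v_0)=V$. Since $j\ge1$, every fibre of $\pi$ (a translate of $W$, of dimension $\le d-1$) lies in a hyperplane of $\Re^d$, and more generally $\pi^{-1}(L)$ has dimension $\dim L+(d-j)$ for any affine $L\subseteq\Re^j$; combined with the assumption that no $d+2$ points of $P$ lie in a hyperplane, this shows that every fibre of $\pi$, and every set $\pi^{-1}(L)$ with $\dim L\le j-1$, contains at most $d+1$ points of $P$. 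Write $m=n/r=|P|$. The first thing I would establish is a characterisation of the ``good'' tuples: for a $(j+1)$-tuple $T$ of $P$, the set $\mathrm{conv}(T)\cap V$ is a single point lying in the relative interior of $\mathrm{conv}(T)$ if and only if $\pi(T)$ is affinely independent (equivalently, $\pi$ is injective on $\mathrm{aff}(T)$) and $v_0$ lies in the interior of the $j$-simplex $\mathrm{conv}(\pi(T))$. When $\pi(T)$ is affinely independent, $\pi$ restricts to an affine isomorphism $\mathrm{aff}(T)\to\Re^j$ carrying $\mathrm{conv}(T)$ onto $\mathrm{conv}(\pi(T))$ and $\mathrm{aff}(T)\cap V$ onto $\{v_0\}$, which gives the equivalence; when $\pi(T)$ is affinely dependent, $\mathrm{aff}(T)\cap V$ is empty or of positive dimension, so a single-point intersection with $\mathrm{conv}(T)$ can occur only on the relative boundary. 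Since every $T\in\mathcal{S}$ already has $v_0\in\mathrm{conv}(\pi(T))$, such a $T$ is good precisely when $\pi(T)$ is affinely independent and $v_0\in\mathrm{int}\,\mathrm{conv}(\pi(T))$.

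Next I would prove the reduction: if $T\in\mathcal{S}$ is not good, then some $j$-subset $S'\subset T$ satisfies $\mathrm{aff}(S')\cap V\ne\emptyset$. If $\pi(T)$ is affinely independent, then $v_0$ lies on a facet of the $j$-simplex $\mathrm{conv}(\pi(T))$, and for $S'$ the complementary $j$-subset we get $v_0\in\mathrm{conv}(\pi(S'))\subseteq\mathrm{aff}(\pi(S'))=\pi(\mathrm{aff}(S'))$, that is, $\mathrm{aff}(S')\cap V\ne\emptyset$. If $\pi(T)$ is affinely dependent, then $v_0\in\mathrm{conv}(\pi(T))$ lies in the flat $\mathrm{aff}(\pi(T))$ of dimension $\le j-1$, so by \CARA's theorem $v_0$ is a convex combination of at most $j$ of the points $\pi(p_i)$, and enlarging that index set to one of size $j$ produces the required $S'$. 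Consequently the number of non-good tuples in $\mathcal{S}$ is at most $m$ times the number of $j$-subsets $S'$ of $P$ with $\mathrm{aff}(S')\cap V\ne\emptyset$, since each such $T$ equals $S'\cup\{q\}$ for one more point $q\in P$.

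It remains to bound that last quantity, and this is the part I expect to be the main obstacle. If $\mathrm{aff}(S')\cap V\ne\emptyset$ then $\dim\mathrm{aff}(S'\cup V)\le\dim\mathrm{aff}(S')+\dim V\le(j-1)+(d-j)=d-1$. Writing $S'=\{q_1,\dots,q_j\}$, I would look at the chain $V=V_0\subseteq V_1\subseteq\dots\subseteq V_j$ with $V_i=\mathrm{aff}(V\cup\{q_1,\dots,q_i\})$: each inclusion raises the dimension by at most one, $\dim V_0=d-j$, and $\dim V_j\le d-1$, so at some step $i^*\in\{1,\dots,j\}$ the dimension stays constant, i.e.\ $q_{i^*}\in V_{i^*-1}$. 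The flat $V_{i^*-1}$ is determined by $V$ and $q_1,\dots,q_{i^*-1}$, has dimension $(d-j)+(i^*-1)\le d-1$, hence lies in a hyperplane and contains at most $d+1$ points of $P$. Charging $S'$ to the sub-tuple $(q_1,\dots,q_{i^*-1})$: there are $O(m^{i^*-1})$ choices for it, then at most $d+1-(i^*-1)$ choices for $q_{i^*}$, then $O(m^{j-i^*})$ choices for the remaining points of $S'$, giving $O(m^{j-1})$ in total after summing over the finitely many values of $i^*$.

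Putting the pieces together, the number of non-good tuples in $\mathcal{S}$ is $O(m^{j-1})\cdot m=C_d\,m^{j}$, so the number of good tuples is at least $|\mathcal{S}|-C_d m^j=\alpha m^{j+1}-C_d m^j\ge\frac{1}{2}\alpha m^{j+1}=\alpha n^{j+1}/2r^{j+1}$ provided $m=n/r\ge 2C_d/\alpha=:c_\alpha$, which is exactly the claim. The crucial point, where the general position hypothesis does its work, is the last count: the flat spanned by $V$ together with a \emph{proper} sub-tuple of $S'$ is still contained in a hyperplane and hence already ``saturated'' after at most $d+1$ points of $P$, which pins $q_{i^*}$ down among $O(1)$ choices and keeps the count at $O(m^{j-1})$ rather than the useless $O(m^{j})$.
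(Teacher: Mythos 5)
Your proposal is correct, and it takes a genuinely different route from the paper. The paper argues by contradiction with an iterative descent over tuple sizes: from the violating $(j+1)$-tuples it extracts many $j$-tuples whose convex hulls meet $V$, then repeatedly either finds a level $i$ at which a large subfamily of $i$-tuples has no $(i-1)$-subset meeting $V$ (and then a graph/high-degree argument places more than $d+1$ points of $P$ in the flat $\textnormal{aff}(V\cup\{p_1,\ldots,p_{i-1}\})$ of dimension at most $d-1$, contradicting general position), or descends to $(i-1)$-tuples, bottoming out with many points of $P$ lying on $V$ itself. You instead bound the exceptional tuples directly: every tuple whose hull meets $V$ but not in a single relative-interior point contains a $j$-subset $S'$ with $\textnormal{aff}(S')\cap V\neq\emptyset$, and your chain argument $V=V_0\subseteq V_1\subseteq\cdots$ shows such $S'$ number only $O((n/r)^{j-1})$, since some $q_{i^*}$ must fall into a flat of dimension at most $d-1$ spanned by $V$ and a prefix, which contains at most $d+1$ points of $P$. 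The underlying use of the general position hypothesis is the same as the paper's, but your bookkeeping is a direct injection/charging count rather than an iteration plus extremal graph step; it yields the stronger, $\alpha$-free statement that only $O\left((n/r)^{j}\right)$ tuples meeting $V$ can fail, and hence an explicit $c_\alpha=O\bigl(d^2/\alpha\bigr)$. One small overstatement: your ``if and only if'' characterisation is not quite right as stated, because the hypothesis (no $d+2$ points in a hyperplane) does not forbid a $(j+1)$-tuple $T$ with $j+1\le d$ from being affinely degenerate, and such a $T$ can meet $V$ in a single relative-interior point even though $\pi(T)$ is affinely dependent (e.g.\ three collinear points and a line $V$ crossing that line in $\Re^3$); likewise ``$\textnormal{aff}(T)\cap V$ is empty or of positive dimension'' can fail for degenerate $T$. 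This is harmless, since your counting only uses the implication ``not good $\Rightarrow$ some $j$-subset's affine hull meets $V$,'' which your two-case analysis (via the facet argument when $\pi(T)$ is independent, and \CARA in the low-dimensional flat when it is dependent) establishes correctly.
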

\begin{proof}
 We can assume that $\alpha > 0$, otherwise there is nothing to show. Assume that the convex hulls of at least $\alpha n^{j+1}/2r^{j+1}$  $(j+1)$-tuples from $\mathcal{S}$ intersect the affine space $V$ more than once or on their relative boundary. We will show that for $n\geq c_{\alpha} r$, where $c_{\alpha}$ is large enough, we obtain a contradiction. When the convex hull of a $(j+1)$-tuple $A$ intersects $V$ more than once, one can take two intersection points $x_1$ and $x_2$ with the affine space $V$ and follow the line passing through $x_1$, $x_2$ until the relative boundary of conv($A$) is intersected. Hence, in both cases the relative boundary of conv($A$) must be intersected. Clearly, this means that there is a subset of $j$ points  from $A$ whose convex hull intersects $V$. Each such $j$-tuple can be counted at most $n/r$ times. Hence, there are at least $\alpha n^{j}/2  r^{j}$ distinct $j$-tuples arising from elements of $\mathcal{S}$ as above.

We define $\mathcal{S}_j$ to be the set of $j$-tuples above, i.e., those whose convex hulls intersect $V$. In order to obtain a contradiction, we consider the following iterative procedure. Set $\gamma_j= \alpha/2$.  Assume that $\mathcal{S}_i$ was defined for some $2\leq i \leq j$ and contains at least  $\gamma_{i} n^{i}/r^{i}$  $i$-tuples whose convex hulls intersect $V$. We say that $\mathcal{S}_i$ is \emph{good} if it has a subset of at least $\gamma_{i} n^{i}/2r^{i}$ $i$-tuples, denoted by $\mathcal{G}_i$, such that the convex hull of no $(i-1)$-tuples which are $(i-1)$-subsets of the $i$-tuples from $\mathcal{G}_i$ intersects the affine space $V$. Otherwise, we say that the set $\mathcal{S}_i$ is \emph{bad}, and define $\mathcal{S}_{i-1}$ to be the set of $(i-1)$-tuples whose convex hulls intersect $V$ and  each of them is contained in some $i$-tuple from $\mathcal{S}_i$. Clearly, the size of $S_{i-1}$ is at least $\gamma_{i} n^{i-1}/2r^{i-1}$, since an $(i-1)$-tuple can appear in at most $n/r$ $i$-tuples of $S_{i}$. Finally, we set $\gamma_{i-1}=\gamma_{i}/2$. For some $i$  the procedure must stop with a good $\mathcal{S}_i$. Indeed, if we had to compute $\mathcal{S}_1$, then this means that we have a set of points from $P$ of cardinality at least $\gamma_{1} n/r$ such that each point belongs to $V$. This means that for  $n$ large enough ($n \geq (d+2)r/\gamma_{1}$), we get a set of at least $d+2$ points contained in $V$. That is, an affine space of dimension at most $d-1$, a contradiction.

Hence, we can assume that $S_i$ is good for some $i$. Let $\mathcal{G}_i$ be as above. Define a graph $G$ whose vertices are the different $(i-1)$-tuples each contained in some $i$-tuple from $\mathcal{G}_i$. For each $i$-tuple from $\mathcal{G}_i$ choose two different $(i-1)$ subsets and connect them by an edge. The number of edges is at least  $\gamma_{i} n^{i}/2 r^{i}$, since an edge determines the $i$-tuple it arises from. 
 Clearly, there is  a vertex of degree at least $\gamma_{i} n^{i}/2 r^{i}\binom{n/r}{i-1}\geq \gamma_{i} n/2r$.  Take one such $(i-1)$-tuple $\{p_1, \ldots, p_{i-1} \}$. This means that the  affine space given by  aff$(V,p_1, \ldots , p_{i-1})$ of dimension at most $d-1$ contains at least $i-1+ \gamma_{i} n/2r$ points, i.e., $p_1, \ldots, p_{i-1}$ and the points of the union of all neighbours of $\{p_1, \ldots, p_{i-1}\}$ in $G$.
Indeed, let $p$ be the intersection point of conv($p_1, \ldots , p_i)$ with $V$, where $p_i$ belongs to some neighbour of $\{p_1, \ldots, p_{i-1} \}$ in $G$. We show that aff($\{p_1, \ldots , p_{i-1},p\}$)=aff($\{p_1, \ldots , p_{i-1},p_{i}\}$).  If $p_i$ is in aff($\{p_1, \ldots , p_{i-1}\}$) the equality is clear. If not, then aff($\{p_1, \ldots , p_{i-1}, p\}$) has dimension strictly bigger than aff($\{p_1, \ldots , p_{i-1}\}$) while being contained in aff($\{p_1, \ldots , p_{i-1}, p_{i}\}$), so the equality holds. Hence, for $n$ large enough ($n \geq  (d+1)2r/\gamma_i)$ we get a contradiction, since strictly more than $d+2$ points are in the  affine space aff$(\{V,p_1, \ldots , p_{i-1}\})$ whose dimension is at most $d-1$, in particular, the points are contained in a hyperplane.



\end{proof}

\begin{proof}[Proof of Lemma \ref{TLC}]
Define the hypergraph on $P$ whose hyperedges are the different $(j+1)$-tuples of $\mathcal{S}$. Iteratively remove a $j$-tuple $A$ from $\binom{n/r}{j}$ and remove the $(j+1)$-tuples containing it from $\mathcal{S}$  if the number of the remaining elements from $\mathcal{S}$ containing $A$ is at most $\alpha_j n^{j+1}/ 2r^{j+1}\binom{n/r}{j}$. Call $\mathcal{S}'$ the remaining set of $(j+1)$-tuples. This procedure cannot remove more than $\alpha_j n^{j+1}/2r^{j+1}$ hyperedges, so the resulting hypergraph is not empty and each $j$-tuple contained in some element from $\mathcal{S}'$ is contained in  $$> \frac{\alpha_j n^{j+1}}{2 r^{j+1} \binom{n/r}{j}}\geq \frac{\alpha_j n}{2r}= \frac{\alpha' n}{r}$$ elements from $\mathcal{S}'$, where we set $\alpha'=\alpha_j/2 $.

We now project the intersections of simplices corresponding to $(j+1)$-tuples from $\mathcal{S}'$ with $h$ onto the $x_{j+1}$-axis. For the sake of simplicity, the projection of the intersection point induced by the tuple $\{p_1, \ldots, p_{j+1}\}$ will still be denoted by $\{p_1, \ldots, p_{j+1}\}$.
Connect $\{p_1, \ldots, p_{j+1}\}$ with $\{q_1, \ldots , q_{j+1} \}$ if there is a sequence  $\{p_1, \ldots, p_{j+1}\}$, $\{p_1, \ldots, p_{j},q_{1} \}, \ldots,$ $ \{q_1, \ldots, q_{j+1}\}$, where each member of the sequence is an element of $\mathcal{S}'$ and the points $p_1, \ldots, p_{j+1}, q_{1}, \ldots , q_{j+1}$ are all distinct. We sometimes refer to such an interval as \emph{type 1} interval. The following procedure gives a lower bound on the number of such intervals (we distinguish two intervals arising from different pairs of $(j+1)$-tuples): Choose any $\{p_1, \ldots, p_{j+1}\}$ in $\mathcal{S}'$. Take any $q_1$ such that  $\{p_1,\ldots, p_j, q_1\}$ is in $\mathcal{S}'$ with $q_1$ different from $p_{j+1}$. Then take any $q_2$ such that $q_2$ is different from $p_j,~p_{j+1}$ and $\{p_1,\ldots, p_{j-1}, q_1, q_2\}$ is in $\mathcal{S}'$ etc. The lower bound below follows  $$\frac{|\mathcal{S}'|(\alpha' n/r -j-1)^{j+1}}{2(j+1)!}\geq\frac{|\mathcal{S}'|(\alpha' n/2r)^{j+1}}{2(j+1)!} $$ given $\alpha' n/2r\geq j+1$. Indeed, starting from $\{p_1, \ldots, p_{j+1}\}$ an interval $[\{p_1, \ldots, p_{j+1}\},$ $\{q_1, \ldots , q_{j+1} \}]$ is counted at most once  for each permutation of $q_1, \ldots , q_{j+1}$. Thus from the one dimensional selection lemma, see \cite{Ar91}, we know that there exists a point $a_{j+1}$ contained in at least $$\frac{|\mathcal{S}'|^2[(\alpha' n/2r)^{j+1}/2(j+1)!]^2}{4|\mathcal{S}'|^2}=\frac{1}{4}\frac{[(\alpha' n/2r)^{j+1}]^2}{[2(j+1)!]^2}= \frac{\alpha'' n^{2j+2}}{r^{2j+2}} $$
intervals, where we set  $\alpha''=\alpha'^{2j+2}/2^{2j+6}[(j+1)!]^2$.

 Clearly, if a point is contained in an interval $[\{p_1, \ldots, p_{j+1}\},$ $ \{q_1, \ldots , q_{j+1} \}]$, it must also be contained in some interval $[\{p_1, \ldots p_s, q_{1}, \ldots q_{j-s+1}\}, \{p_1, \ldots p_{s-1}, q_{1}, \ldots q_{j-s+2}\}]$. These latter kind of intervals are refered to as \emph{type 2} intervals.  Moreover, an interval of type 2 can be counted at most $(j+1)(jn/r)^j$ times. Indeed, there are at most $j+1$ possible positions for such an interval in a chain as above (used to define type 1 intervals), at most $j$ possibilities of choosing a point that is replaced in a $(j+1)$-tuple while a subchain is extended, and at most $n/r$ candidates to replace such a point. Hence, $a_{j+1}$ is contained in at least $\alpha'' n^{2j+2}/r^{2j+2}(j+1)(jn/r)^j=\alpha '''n^{j+2}/ r^{j+2}$ intervals of type 2, where $\alpha'''= \alpha''/(j+1)j^j$.

  Each interval of this latter type containing the point $a_{j+1}$ corresponds to a $(j+1)$-simplex spanned by $P$ intersecting the affine space given by $x_1=a_1, \ldots, x_{j+1}=a_{j+1}$. Finally, it is easy to see that a spanned $(j+1)$-simplex arises from at most $(j+2)(j+1)$ intervals of type 2. Hence, there exist at least $\alpha'''n^{j+2}/(j+2)(j+1)r^{j+2}$ $(j+1)$-simplices arising from intervals of type 2 pierced by $a_{j+1}$. Thus, selecting one such interval for each such simplex and defining $\alpha_j=\alpha'''/(j+2)(j+1)$ we can conclude.

\end{proof}

\bibliography{abc}

\end{document}